\documentclass[a4paper,UKenglish,cleveref, autoref, thm-restate]{lipics-v2021}

\usepackage{graphicx}
\usepackage{mathtools}
\usepackage{amsmath}
\usepackage{amsthm}
\usepackage{amssymb}
\usepackage{appendix}
\usepackage[dvipsnames]{xcolor}
\usepackage{caption}

% Set 1-inch margins all around
%\usepackage[margin=1in, letterpaper]{geometry}

% Use single spacing
%\usepackage{setspace}
%\singlespacing

% Extra spacing between paragraphs
%\setlength{\parskip}{1em}
%\setlength{\parindent}{0pt} % No paragraph indentation

\usepackage[linesnumbered,ruled]{algorithm2e}
\usepackage{tabularx}

\bibliographystyle{plainurl}

\newcommand{\indeg}{\ensuremath{\mathsf{indeg}}\xspace}
\newcommand{\outdeg}{\ensuremath{\mathsf{outdeg}}\xspace}
\newcommand{\ILCS}{\ensuremath{\mathsf{ILCS}}\xspace}
\newcommand{\ELCS}{\ensuremath{\mathsf{ELCS}}\xspace}
\newcommand{\LCS}{\ensuremath{\mathsf{LCS}}\xspace}
\newcommand{\share}{\ensuremath{\mathsf{share}}\xspace}

\newcommand{\pred}{\textrm{pred}}

\definecolor{darkorange}{rgb}{0.8, 0.4, 0.0}

%TikZ stuff
\usepackage{tikz}
\usetikzlibrary{arrows, quotes}
\tikzstyle{Red Dot}=[fill=red, draw=black, shape=circle]
\tikzstyle{Green Dot}=[fill=white, draw=green, shape=circle]
\tikzstyle{Empty}=[fill=white, draw=black, shape=circle]
\tikzstyle{Black}=[fill=black, draw=black, shape=circle]
\tikzstyle{Red fill}=[-, fill={rgb,255: red,255; green,88; blue,91}]
\tikzstyle{grey fill}=[fill={rgb,255: red,207; green,207; blue,207}, draw=black, shape=circle]
\tikzstyle{Red Circle}=[fill=white, draw=red, shape=circle]

\tikzstyle{Left}=[<-]
\tikzstyle{Right}=[->]

\pgfdeclarelayer{nodelayer}
\pgfdeclarelayer{edgelayer}
\pgfsetlayers{edgelayer,nodelayer,main}
%End of tikz stuff
%For table stuff
\usepackage{makecell}
\usepackage{cellspace} 
\setlength{\cellspacetoplimit}{6pt}
\setlength{\cellspacebottomlimit}{6pt}

\usepackage{thmtools,thm-restate}

\title{Computing $k$-mers in Graphs}

%\author{Jarno N. Alanko\footnote{Department of Computer Science, University of Helsinki, Finland.} \and Máximo Pérez López\footnote{DTU Compute, Technical University of Denmark, Denmark.} \and Simon J. Puglisi$^\dagger$}

%\author{Anonymous}
%\date{\today}
%\date{}

%\usepackage{graphicx} % Required for inserting images
%\usepackage{url}
%\usepackage[ruled,linesnumbered,algo2e,vlined]{algorithm2e}
%\usepackage{amsmath}
%\usepackage[roman]{complexity}
%\usepackage{todonotes}

%\SetKwProg{Fn}{Function}{}{end}
%\newcommand{\height}{\mathit{height}}
%\newcommand{\lzno}{z_{\mathrm{no}}}
%\newcommand{\lpf}{\mathrm{lpf}}
%\newcommand{\occ}{\mathrm{Occ}}
%\newcommand{\leftmostOcc}{\mathrm{lmocc}}
%newcommand{\rlg}{g_{\mathrm{rl}}}
%\newcommand{\rlgOpt}{\hat{g}_{\mathrm{rl}}}
%\newcommand{\lzhbOpt}[1]{\hat{z}_{\mathit{HB}(#1)}}
%\newcommand{\lzhbOptn}[1]{\hat{\tilde{z}}_{\mathit{HB}(#1)}}
%\newcommand{\lzhbn}[1]
%{\tilde{z}_{\mathit{HB}(#1)}}
%\newcommand{\lzendOpt}{z_{\mathrm{end}}}
%\newcommand{\period}{\mathit{per}}

% To get around LIPics's terrible treatment of \paragraph
%\newcommand*{\myblock}[1]{\subparagraph{#1.}}

%\author{Anonymous}{Anonymous}{}{}{}
\author{Jarno N. Alanko}{Department of Computer Science, University of Helsinki, Helsinki, Finland}{}{}{}
\author{Máximo Pérez-López}{Department of Computer Science, Technical University of Denmark, Copenhagen, Denmark}{}{}{}

%\authorrunning{Anonymous}
\authorrunning{J. N. Alanko and M. Pérez-López} %TODO mandatory. First: Use abbreviated first/middle names. Second (only in severe cases): Use first author plus 'et al.'

%\Copyright{Anonymous}
\Copyright{Jarno N. Alanko and Máximo Pérez-López} %TODO mandatory, please use full first names. LIPIcs license is "CC-BY";  http://creativecommons.org/licenses/by/3.0/

%\ccsdesc[100]{\textcolor{red}{Replace ccsdesc macro with valid one}} %TODO mandatory: Please choose ACM 2012 classifications from https://dl.acm.org/ccs/ccs_flat.cfm 
\ccsdesc[500]{Theory of computation}

\keywords{Wheeler graph, Wheeler language, de Bruijn graph, graph, k-mer, q-gram, DFA, \#P-hard} %TODO mandatory; please add comma-separated list of keywords

\category{} %optional, e.g. invited paper

\relatedversion{} %optional, e.g. full version hosted on arXiv, HAL, or other respository/website
%\relatedversiondetails[linktext={opt. text shown instead of the URL}, cite=DBLP:books/mk/GrayR93]{Classification (e.g. Full Version, Extended Version, Previous Version}{URL to related version} %linktext and cite are optional

%\supplement{}%optional, e.g. related research data, source code, ... hosted on a repository like zenodo, figshare, GitHub, ...
%\supplementdetails[linktext={opt. text shown instead of the URL}, cite=DBLP:books/mk/GrayR93, subcategory={Description, Subcategory}, swhid={Software Heritage Identifier}]{General Classification (e.g. Software, Dataset, Model, ...)}{URL to related version} %linktext, cite, and subcategory are optional

%\funding{(Optional) general funding statement \dots}%optional, to capture a funding statement, which applies to all authors. Please enter author specific funding statements as fifth argument of the \author macro.

%\acknowledgements{I want to thank \dots}%optional

\nolinenumbers %uncomment to disable line numbering

%Editor-only macros:: begin (do not touch as author)%%%%%%%%%%%%%%%%%%%%%%%%%%%%%%%%%%
\EventEditors{John Q. Open and Joan R. Access}
\EventNoEds{2}
\EventLongTitle{42nd Conference on Very Important Topics (CVIT 2016)}
\EventShortTitle{CVIT 2016}
\EventAcronym{CVIT}
\EventYear{2016}
\EventDate{December 24--27, 2016}
\EventLocation{Little Whinging, United Kingdom}
\EventLogo{}
\SeriesVolume{42}
\ArticleNo{23}
%%%%%%%%%%%%%%%%%%%%%%%%%%%%%%%%%%%%%%%%%%%%%%%%%%%%%%

\begin{document}

\maketitle

\begin{abstract}
We initiate the study of computational problems on $k$-mers (strings of length $k$) in labeled \emph{graphs}. As a starting point, we consider the problem of counting the number of distinct $k$-mers found on the walks of a graph. We establish that this is $\#$P-hard, even on connected deterministic DAGs. However, in the class of deterministic Wheeler graphs (Gagie, Manzini, and Sir{\'e}n, TCS 2017), we show that distinct $k$-mers of such a graph $W=(V, E)$ can be counted using $O(|W|k)$ or $O(n^4 \log k)$ arithmetic operations, where $n=|V|$, $m=|E|$ and $|W|=n+m$. The latter result uses a new generalization of the technique of prefix doubling to Wheeler graphs. To generalize our results beyond Wheeler graphs, we discuss ways to transform a graph into a Wheeler graph in a manner that preserves the $k$-mers.

As an application of our $k$-mer counting algorithms, we construct a representation of the de Bruijn graph of the $k$-mers that occupies $O(n_k + |W|k \log(\max_{1 \leq \ell \leq k} n_\ell) + \sigma\log m)$ bits of space, where $n_\ell$ is the number of distinct $\ell$-mers in the Wheeler graph, and $\sigma$ is the size of the alphabet. We show how to construct it in the same time complexity. Given that the Wheeler graph can be exponentially smaller than the de Bruijn graph, for large $k$ this provides a theoretical improvement over previous de Bruijn graph construction methods from graphs, which must spend $\Omega(k)$ time per $k$-mer in the graph.
\end{abstract}

%STACS sumbission guidelines state that the title page should contain title + abstract, and the first section should start on the next page.
\newpage
\setcounter{page}{1}

\section{Introduction}

% \jarno{Motivation from genome graphs. Brief literature review, point out that our problem is very under-studied. Some applications (e.g. Jaccard distances of $k$-spectra of graphs, randomly sampling of $k$-mers).}

In recent years, a sizeable body of literature in the field of computational biology has emerged on storing and indexing sets of $k$-mers (strings of length $k$). Typically, these $k$-mers are all the overlapping windows of length $k$ in some genome sequence. In strings, the problem of counting the number of distinct $k$-mers is easily solvable in linear time using a combination of suffix and LCP arrays~\cite{manber1993suffix, KasaiLCP}, which are computable in $O(n)$ time for polynomial alphabets~\cite{KarkkainenST}. There exist highly optimized practical tools for counting short $k$-mers on strings and encoding them efficiently by exploiting this overlap structure~\cite{marchet2021data}. 

However, a significant effort is underway to move from linear reference genomes into a more rich graph-based reference genome that encodes all variation in a species. Very little attention has been paid to the problem of counting $k$-mers in more general structures such as trees, DAGs and general graphs, some of which can encode an \emph{exponential} number of distinct $k$-mers. There is some practical work toward this direction in the field of pangenomics~\cite{siren2017indexing}, but the theoretical limits and possibilities remain largely unexplored. In this paper, we take the first steps towards mapping this uncharted territory.

First, we show that it is \#P-hard to compute the number of distinct $k$-mers in directed labeled graphs, even if we restrict ourselves to connected deterministic graphs without directed cycles. However, we find that in the special case of deterministic \textit{Wheeler graphs} \cite{gagie2017wheeler}, we can compute the number of $k$-mers in $O(|W|k)$ time, where $|W|$ denotes the total number of vertices and edges of $W$, with a simple dynamic programming approach. Pushing further, we study what it takes to improve the time complexity with respect to $k$. Here, the problem turns out to be significantly more complicated than on strings, but we manage an $O(n^4 \log k)$ algorithm by generalizing the technique of \textit{prefix doubling}~\cite{manber1993suffix} to Wheeler graphs. We believe there is room to improve the $O(n^4)$ factor. These results pave the way toward counting $k$-mers in arbitrary graphs without explicit enumeration, by reducing them to Wheeler graphs that preserve the $k$-mer set. 

As an application of $k$-mer counting, we present an algorithm that builds the node-centric de Bruijn graph of the $k$-mers of a Wheeler graph $W$ in $O(n_k + |W|k\log(\max_{1\leq\ell\leq k} n_\ell) + \sigma\log m)$ time, where $n_\ell$ is the number of $\ell$-mers in $W$. Crucially, the multiplicative factor of $k$ is on the size of the Wheeler graph, rather than the size of the potentially exponential de Bruijn graph (which contains one node per $k$-mer). Previous algorithms to build de Bruijn graphs from graphs or sets of $k$-mers spend $\Omega(k)$ time per $k$-mer inserted in the graph, which inevitably incurs a cost of $O(kn_{k})$. In the bioinformatics literature, we find the GCSA2 construction algorithm \cite{siren2017indexing}, which builds the de Bruijn graph of a graph by building all distinct $k$-mers of the graph by prefix doubling. Other de Bruijn graph construction algorithms, such as dynamic versions of the  data structure by Bowe et al. \cite{BOSSdBg, SuccintDynamicDBgsAlipanahi}, do not take a graph as input, and only allow the insertion of an explicit $k$-mer at a time, at a worst-case cost of $\Omega(k)$ per insertion. This application showcases how our theoretical results translate into practical gains for graph algorithms.

The remainder of this paper is organized as follows. In Section \ref{sec:prelim}, we lay down some basic notation and concepts. In Section~\ref{sec:complexity}, we discuss the computational complexity of the problem, and prove the problem to be $\#$P-hard. In Section~\ref{sec:kmer_counting}, we solve the problem on Wheeler graphs in time $O(|W|k)$ or $O(n^4 \log k)$. In Section~\ref{sec:k_wheelerization}, we discuss ways of turning an arbitrary graph into a Wheeler graph with the same $k$-mers in order to apply the algorithms in Section \ref{sec:kmer_counting}.  Section \ref{sec:dbg} describes the de Bruijn graph construction algorithm. Section~\ref{sec:conclusion} concludes the paper with some closing remarks.

\section{Notation and preliminaries}\label{sec:prelim}

We consider directed graphs with single character labels on the edges. Formally, we consider graphs $G$ formed by a vertex set $V(G)$ and an edge set $E(G)\subseteq V\times V\times \Sigma$, where $\Sigma$ is an alphabet of symbols that is totally ordered by a relation $\prec$. The indegree and outdegree of a node $v$ are denoted with $\indeg(v)$ and $\outdeg(v)$, respectively. A $k$-mer $\alpha$ is a string of length $k$ from $\Sigma^k$, indexed from $1$ to $k$. We compare strings by extending $\prec$ to $\Sigma^*$ co-lexicographically, which is the right-to-left lexicographical order (we abbreviate this by \textit{colex} order). We say that a labeled graph $G$ contains a $k$-mer $\alpha$ if there exists a walk $v_0, v_1, \ldots, v_k$ in $G$ such that $(v_{i-1}, v_i, \alpha[i])\in E(G)$ for all $1\leq i \leq k$. In this case, we say that $\alpha$ \textit{reaches} $v_k$, and write $\alpha\leadsto v_k$. The minimum and maximum $k$-mer that reach a vertex $v$ are denoted $\min^k_{v}$ and $\max^k_{v}$, respectively. A Wheeler graph is a labeled directed graph with the following properties \cite{gagie2017wheeler}:

\begin{definition} \label{def:wheeler_graph}
A labeled directed graph $G$ is a \textbf{Wheeler graph} if there is an ordering on the nodes such that nodes with no incoming edges precede those with positive indegree, and every pair of edges $(u,v,a), (u',v',a')$ of the state graph satisfies the following two properties:
\begin{itemize}
    \item W1: $a \prec a' \Rightarrow v < v'$
    \item W2: $(a = a' \land u < u') \Rightarrow v \leq v'$
\end{itemize}
\end{definition}
We say that a graph is \textit{deterministic} if no two outgoing edges from the same vertex have the same label. In deterministic Wheeler graphs, we have the following stronger property:
\begin{lemma}\label{lemma:det_wheeler_graph} In deterministic Wheeler graphs, for any pair of edges, $(u,v,a), (u',v',a')$ we have $(a = a' \wedge v<v') \Rightarrow u < u'$. $\qed$
\end{lemma}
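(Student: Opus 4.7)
The plan is to establish the biconditional $u<u'\Leftrightarrow v<v'$ (under the hypothesis $a=a'$) by a case analysis on the relative order of $u$ and $u'$, combining property W2 with the determinism hypothesis and trichotomy on the node ordering.

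For the direction $v<v'\Rightarrow u<u'$, I would rule out the two alternatives. If $u'<u$, then W2 applied to the pair written in reversed order gives $v'\le v$, contradicting $v<v'$. If $u=u'$, then determinism (at most one outgoing $a$-labelled edge per vertex) forces the two edges to coincide, so $v=v'$, again a contradiction. Hence $u<u'$. This direction is a short trichotomy argument using W2 and forward determinism directly.

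For the direction $u<u'\Rightarrow v<v'$, W2 gives $v\le v'$ immediately, and the substantive task is to exclude the case $v=v'$. This is the main obstacle, because W2 on its own does not forbid two $a$-labelled edges from distinct sources sharing a common target. To close the gap I would invoke the path/colex characterization of Wheeler graphs: the node ordering coincides with the colex ordering of the strings reaching each node, and appending a common symbol preserves strict colex order. So if $\alpha\prec\alpha'$ are representative strings reaching $u$ and $u'$, then $\alpha a\prec\alpha' a$ reach the targets $v$ and $v'$, and in the deterministic setting this forces $v<v'$ rather than $v=v'$. The technical content of the lemma lies in this strict-inequality step, which is why the statement is attributed to prior work rather than derived from W2 alone.
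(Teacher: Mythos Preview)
The paper does not give its own proof of this lemma; it cites \cite{DBLP:conf/dcc/ConteCGMPS23} and ends the statement with~$\qed$. So there is no paper-proof to compare against, and your proposal has to stand on its own.

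Your direction $v<v'\Rightarrow u<u'$ is correct and complete.

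Your direction $u<u'\Rightarrow v<v'$ has a real gap. You rightly note that W2 alone only gives $v\le v'$ and that the case $v=v'$ must be ruled out, but the argument you offer --- that the strict colex order $\alpha a\prec\alpha' a$ on strings reaching the targets ``forces $v<v'$ in the deterministic setting'' --- does not go through. Forward determinism constrains \emph{outgoing} edges only; it says nothing about two $a$-edges from distinct sources sharing a target, nor about a single node being reached by two distinct strings. Concretely, take the four-node graph with Wheeler order $s<v<u<u'$, edges $(s,u,b)$, $(s,u',c)$, $(u,v,a)$, $(u',v,a)$, and alphabet order $a\prec b\prec c$: one checks directly that W1, W2, and forward determinism all hold, yet $u<u'$ while both $a$-edges land on the same~$v$. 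So the colex appeal cannot close the gap, and the forward implication can actually fail as literally stated. Where the paper \emph{uses} the lemma (the proof of Lemma~\ref{lemma:kmer_order}) it first establishes that the two targets are distinct; under that extra hypothesis $v\neq v'$, the forward direction follows immediately from W2, with no colex machinery needed.
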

There exists an algorithm to transform a non-deterministic Wheeler graph into a deterministic one with the same $k$-mers, that in the worst case is only a linear factor larger \cite{alanko2020regular}. In contrast, in general graphs the deterministic version could be exponentially larger. However, the running time of the algorithm is $O(n^3)$ \cite{alanko2020regular}, and improving this upper bound is an open problem. Thus, we assume that our input are deterministic graphs throughout the paper.
%\noindent A nondeterministic Wheeler automaton is called an WNFA, and a deterministic one is called a WDFA.

\noindent In a labeled graph we can identify the incoming \textit{infimum} and \textit{supremum} strings to every vertex, denoted $\inf_v$ and $\sup_v$ for all $v\in V$. We use the definition of \cite{DBLP:conf/cpm/AlankoCCKMP24}, though note that they read the strings backwards, and use lexicographical order, while we read the strings forward and use colex order instead. Let $\gamma_1,\gamma_2,\ldots,\gamma_{2n}$ be the colexicographically sorted infimum and supremum strings of the $n$ vertices of a Wheeler graph. The \textit{longest common suffix} array is defined as an array $\mathsf{LCS}[2..2n]$, where $\mathsf{LCS}[i]$ is the longest common suffix of $\gamma_{i-1}$ and $\gamma_i$. It was first defined in \cite{DBLP:conf/dcc/ConteCGMPS23}, where it was also proven that in a Wheeler graph, $\inf_v\preceq\sup_v$, and $\sup_u\preceq\inf_v$ for all $u<v$ in Wheeler order. Observe that we can find the \LCS value between any infima and suprema in $O(1)$ time, by building a range minimum query data structure \cite{Fischer-RMQ, original_rmq} over the \LCS array, and performing the appropriate query.

We find it convenient to divide the \LCS array into two parts: one is the \textit{external} \LCS array, written \ELCS, that holds the \LCS between the supremum and infimum string of $v_{i-1}$ and $v_{i}$. The other one is the \textit{internal} \LCS array, written \ILCS, that holds the \LCS value of the infimum and supremum of the same vertex. We compute them as $\ELCS(v_{i-1}, v_i)=\LCS[2i-1]$, and $\ILCS(v_i)=\LCS[2i]$. In Section \ref{sec:kmer_counting} we show how to use these arrays in the context of $k$-mers.

\textbf{Model of computation}. Since the number of $k$-mers in a graph may be exponential, the time complexity of integer addition and multiplication of large integers can become relevant for the running time of our algorithms. We put this issue to the side by assuming a Word RAM model \cite{Hagerup1998} with a word size large enough to hold the maximum numbers that we handle, which are proportional to the maximum number of $\ell$-mers in the graph, for any $1\leq\ell\leq k$.
%Even on real-world machines, our counting algorithms can be run within the stated time bounds to compute the number of $k$-mers modulo some integer $m$ by running the arithmetic in the ring $\mathbb Z_m$.

\section{Computational complexity of counting $k$-mers in graphs}\label{sec:complexity}

%\begin{figure}[t]
%    \centering
%    \includegraphics[width=1.0\linewidth]{figures/reduction.png}
%    \caption{Reduction from $\#DNF$ to $k$-mer counting on a deterministic connected acyclic graph.}
%    \label{fig:DNF-reduction}
%\end{figure}

\begin{figure}
    \centering
    \includegraphics[width=\linewidth]{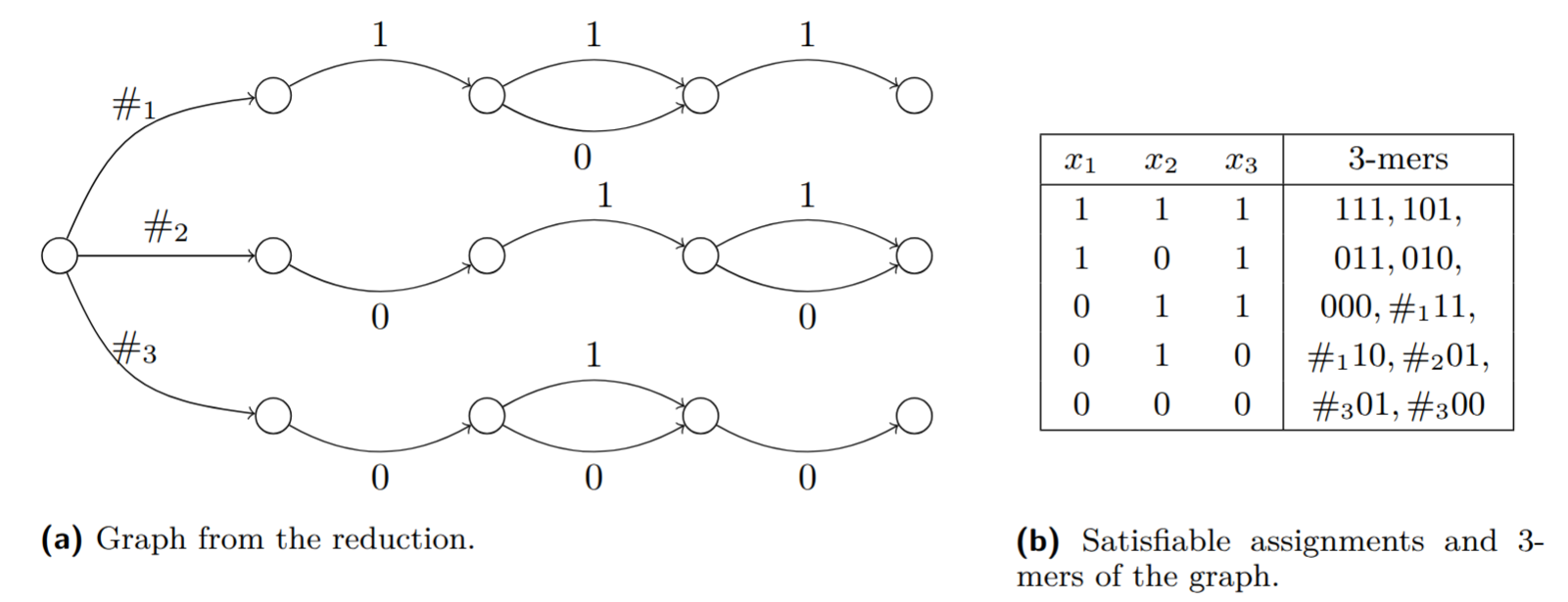}
    \caption{Reduction from the example DNF formula $F=(x_1\wedge x_3) \vee(\neg x_1\wedge x_2)\vee(\neg x_1\wedge \neg x_3)$, to counting $3$-mers. Note how $d_1=1, d_2=0, d_3=1$ and thus $\#SAT(F)=10-2^1-2^0-2^1=5$.}
    \label{fig:DNF-reduction}
\end{figure}

In this section we show that the $k$-mer counting problem on graphs is $\#$P-hard even when the graph is acyclic, connected (i.e., there exists a node from which every other node can be reached) and deterministic (i.e., no two edges with the same label have the same origin).

Let $\#SAT(F)$ denote the number of satisfying assignments to a boolean formula $F$ over $n$ variables $x_1, \ldots, x_n$ in conjunctive normal form. Computing $\#SAT(F)$ is the archetypical $\#P$-complete problem \cite{valiant1979complexity}. It is also well known that the problem is $\#P$-complete in disjunctive normal form since we can write the number of solutions to a CNF formula $F$ as $2^n - \#SAT(\neg F)$, where $\neg F$ can be written in disjunctive normal form of the same size as $F$ by de Morgan's laws.
\begin{theorem}\label{thm:sharp_P}
    Counting the number of distinct $k$-mers on walks on an acyclic deterministic connected graph $G$ is $\#$P-hard.
\end{theorem}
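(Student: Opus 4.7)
The plan is to give a polynomial-time Turing reduction from $\#$DNF-SAT, which is $\#P$-complete as the preceding paragraph records (via $\#SAT(F) = 2^n - \#SAT(\neg F)$ and De Morgan's laws). Given a DNF formula $F = C_1 \vee \cdots \vee C_m$ over variables $x_1, \ldots, x_n$, assumed without loss of generality to contain no clause with both a literal and its negation, I would construct a graph $G_F$ and set $k = n$ so that the number of distinct $k$-mers in $G_F$ equals $\#SAT(F)$ plus a polynomial-time computable correction.

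The construction, of which Figure~\ref{fig:DNF-reduction} is an instance, is as follows. Use alphabet $\Sigma = \{0, 1, c_1, \ldots, c_m\}$. Take a source vertex $s$ and, for each clause $C_i$, a disjoint path gadget on vertices $v^i_0, v^i_1, \ldots, v^i_n$. Add the edge $(s, v^i_0, c_i)$ for every $i$. Inside gadget $i$, for each position $j \in \{1, \ldots, n\}$: if $C_i$ forces $x_j = b \in \{0,1\}$ include a single edge $(v^i_{j-1}, v^i_j, b)$; otherwise include two parallel edges from $v^i_{j-1}$ to $v^i_j$ labeled $0$ and $1$. The resulting graph is deterministic (outgoing labels are pairwise distinct at every vertex), acyclic (a tree of paths rooted at $s$, the only multiplicity coming from parallel edges inside gadgets), and connected in the sense of the theorem because $s$ reaches every vertex.

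The key structural claim is that every walk of length $n$ in $G_F$ falls into exactly one of two types. Type (a): the walk starts at some $v^i_0$ and uses all $n$ edges of gadget $i$, spelling an assignment $\alpha \in \{0,1\}^n$ that satisfies $C_i$; no walk of length $n$ can begin at $v^i_j$ for $j \ge 1$ since gadget $i$ then provides only $n-j < n$ remaining edges and $v^i_n$ is a sink. Type (b): the walk starts at $s$, takes the $c_i$-labeled edge, then walks $n-1$ edges inside gadget $i$, spelling a $k$-mer of the form $c_i \beta$ with $\beta \in \{0,1\}^{n-1}$. Type (a) and type (b) strings are disjoint (different first characters); the type-(a) strings across all $i$ form exactly the set of satisfying assignments of $F$, contributing $\#SAT(F)$ distinct $k$-mers; and the type-(b) strings across different $i$ are disjoint as well, while within a fixed $i$ the number of distinct length-$(n-1)$ suffixes $\beta$ equals $2^{d_i}$, where $d_i$ counts positions $j \in \{1, \ldots, n-1\}$ at which $x_j$ is free in $C_i$. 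Hence the total number of distinct $k$-mers in $G_F$ equals $\#SAT(F) + \sum_{i=1}^m 2^{d_i}$, and since the correction $\sum_i 2^{d_i}$ is computable from $F$ in polynomial time, a single call to the $k$-mer counter suffices to recover $\#SAT(F)$.

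The main obstacle is to arrange the graph to be simultaneously connected, acyclic, and deterministic without polluting the $k$-mer set in a way that breaks the reduction. The trick that resolves this is the use of fresh separator symbols $c_1, \ldots, c_m$ on the edges leaving $s$: they make determinism at $s$ automatic and cleanly separate the ``housekeeping'' type-(b) $k$-mers from the genuine assignment $k$-mers, so that the extra contribution factors per clause into a clean sum of powers of two. Once this setup is in place, verifying the count reduces to straightforward per-gadget arithmetic, as illustrated in Figure~\ref{fig:DNF-reduction} by $\#SAT(F) = 10 - 2^1 - 2^0 - 2^1 = 5$.
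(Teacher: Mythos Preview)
Your proposal is correct and follows essentially the same approach as the paper: the same per-clause path gadget over $\{0,1\}$, the same fresh separator labels from a new source to restore connectivity and determinism, and the same correction term $\sum_i 2^{d_i}$ counting the ``housekeeping'' $n$-mers that begin at the source. Your type-(a)/type-(b) decomposition is exactly the paper's implicit split between $n$-mers inside gadgets and $n$-mers using the new source edge.
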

\begin{proof}
We reduce from counting satisfying assignments of a DNF formula. Let $n$ be the number of distinct variables in the formula. For each clause, we introduce a graph gadget. The gadget for clause $C_j$ is as follows: add one node $v_i$ for each $i = 0, \ldots, n$, and edges $(v_{i-1},v_{i},c)$, for all variables $x_i$ that appear in $C_j$, such that the label $c$ is 1 if $x_i$ appears as positive in $C_j$, otherwise 0. For variables $x_i$ that do not appear in $C_j$, we add both edges $(v_{i-1}, v_i, 0)$ and $(v_{i-1}, v_i, 1)$. Now each distinct $n$-mer in this gadget corresponds to a satisfying truth assignment to the clause, such that the $i$-th character of the $n$-mer is 1 iff variable $x_i$ is set to true. Let $G$ be the union of all the gadgets for all clauses in the input.
Now, the number of distinct $n$-mers in $G$ is equal to the number of satisfying truth assignments. To make the graph connected, we add a new source node that has an edge to the sources of each of the gadgets, with a unique label $\#_j$ for each gadget $C_j$. See Figure \ref{fig:DNF-reduction}. Let us denote the modified graph with $G_\#$. The addition of the new source node adds $2^{d_j}$ distinct $n$-mers ending in the gadget for $C_j$, where $d_j$ is the number of nodes with two outgoing edges among the first $(n-1)$ nodes of the gadget. Thus, if the number of $n$-mers in $G_\#$ is $N$, and the number of clauses is $m$, then the number of satisfying assignments to the DNF formula is $N - \sum_{j=1}^m 2^{d_j}$.
\end{proof}
\noindent It is $\#P$-hard even to count the number of $k$-mers arriving at a single node:
\begin{corollary}
Counting the number of $k$-mers arriving at a single node $v$ in a graph $G$ is $\#$P-hard.
\begin{proof}
If there were an efficient algorithm to count the $k$-mers coming into a single node $v$, then we could add a new sink state to any graph, with edges labeled with a unique symbol $\#$ pointing to the sink: now the number of $(k+1)$-mers ending at the sink equals the number of $k$-mers in the original graph.
\end{proof}
\end{corollary}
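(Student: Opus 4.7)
The plan is to reduce from the problem shown to be $\#$P-hard in Theorem \ref{thm:sharp_P}, namely counting all distinct $k$-mers in an acyclic, connected, deterministic graph, to the problem of counting $k$-mers that arrive at a single designated vertex. The idea is to funnel every $k$-mer of the input graph into one new vertex by appending a single fresh edge on the right.

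Concretely, given an input graph $G$ as in Theorem \ref{thm:sharp_P}, construct $G'$ by adding one new vertex $s$ and, for every $v \in V(G)$, an edge $(v, s, \#)$, where $\#$ is a symbol not appearing in the alphabet of $G$. Every $(k+1)$-mer arriving at $s$ in $G'$ must use a $\#$-labeled edge as its last step (since these are the only edges entering $s$), so it has the shape $\alpha \cdot \#$ where $\alpha$ is a $k$-mer reaching some vertex of $G$ in the original graph; conversely, every $k$-mer of $G$ extends uniquely to such a $(k+1)$-mer in $G'$. Hence the number of $(k+1)$-mers arriving at $s$ in $G'$ equals the number of distinct $k$-mers of $G$, and a polynomial-time algorithm for the single-target problem would solve the problem of Theorem \ref{thm:sharp_P} in polynomial time.

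The only thing left to verify is that $G'$ still satisfies the hypotheses used for the hardness of the source problem, so that the reduction stays inside the claimed setting. The vertex $s$ has no outgoing edges, so $G'$ inherits acyclicity from $G$; connectedness is preserved because every node of $G$ now has an edge to $s$, and all prior reachability is untouched; determinism is preserved because $\#$ is fresh, so each vertex of $G'$ has at most one outgoing $\#$-edge and no conflict is created among the old labels. This is essentially all the work; there is no real obstacle, as the reduction is a one-step gadget and the bijection between $k$-mers of $G$ and $(k+1)$-mers ending at $s$ is immediate from determinism of the new $\#$-edges.
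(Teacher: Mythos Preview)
Your proof is correct and follows essentially the same approach as the paper: add a fresh sink with $\#$-labeled edges from every vertex so that $(k{+}1)$-mers into the sink biject with $k$-mers of $G$. Your extra verification that acyclicity, connectedness, and determinism are preserved is a welcome bit of care that the paper leaves implicit.
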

\noindent Theorem \ref{thm:sharp_P} also implies that counting factors of length $k$ in a regular language defined by a DFA is $\#$P-hard:
\begin{corollary}
Counting factors of length $k$ in the language of a DFA is $\#$P-hard.
\end{corollary}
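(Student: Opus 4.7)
The plan is a direct polynomial-time reduction from the $k$-mer counting problem of Theorem~\ref{thm:sharp_P}. Let $G$ be an acyclic, deterministic, connected graph with a source $s$ from which every vertex is reachable. I would build a DFA $D$ with the same state set and transition function as $G$, declaring $s$ to be the initial state and marking every state as accepting. Determinism of $G$ makes $D$ a well-formed (possibly partial) DFA; if the definition in use requires a total transition function, one adds a non-accepting sink state that absorbs all missing transitions, which does not change the accepted language.

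The key step is to verify that the distinct length-$k$ factors of $L(D)$ are exactly the distinct $k$-mers of $G$. Every word in $L(D)$ is the label of some walk starting at $s$, so each of its length-$k$ factors labels a length-$k$ sub-walk of $G$, i.e., a $k$-mer. Conversely, if $\alpha$ is a $k$-mer of $G$ witnessed by a walk $v_0 \to v_1 \to \cdots \to v_k$, connectedness supplies a walk from $s$ to $v_0$; concatenating the two yields a walk from $s$ whose label lies in $L(D)$ (since every state is accepting) and contains $\alpha$ as a factor. Hence the two sets of strings coincide, and so do their cardinalities.

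Since the construction of $D$ is clearly linear in $|G|$, any algorithm counting length-$k$ factors of the language of a DFA would yield, in polynomial time, the count of distinct $k$-mers in $G$. The $\#$P-hardness of the latter (Theorem~\ref{thm:sharp_P}) then transfers to the former. I do not foresee a genuine obstacle here; the only point worth stating explicitly is the standard interpretation of "factors of the language" as distinct substrings occurring in at least one accepted word, under which the correspondence above is exact.
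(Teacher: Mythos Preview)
Your proposal is correct and follows essentially the same approach as the paper: take the hard instance from Theorem~\ref{thm:sharp_P}, declare the source to be the initial state, make all states accepting, and observe that length-$k$ factors of the resulting language coincide with the $k$-mers of the graph. Your write-up is simply a more detailed version of the paper's two-sentence argument, with the bijection spelled out in both directions and the partial-vs.-total DFA issue addressed explicitly.
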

\begin{proof}
We can turn a connected deterministic graph $G$ into a DFA by picking a node that can reach all states as the initial state, and making every state accepting. Now counting factors of length $k$ in the language of the automaton counts $k$-mers in the graph.
\end{proof}

\section{Algorithms for $k$-mer counting in deterministic Wheeler graphs}\label{sec:kmer_counting}

In this section, we develop two polynomial-time algorithms for counting $k$-mers on deterministic Wheeler graphs. We abbreviate those by \textit{DWGs}. Let $W=(V, E)$ be a DWG, with $|V|=n$, $|E|=m$ and $|W|=n+m$. The first algorithm has time complexity $O(|W|k)$. For the second algorithm, we reduce the dependency on $k$, but we end up with a time complexity of $O(n^4 \log k)$.

These algorithms require, as an internal step, to obtain the LCS array up to a length of $k$, i.e. $\min\{\mathsf{LCS}(\gamma_i, \gamma_{i+1}), k\}$ for all values in the array. The current known \textsf{LCS} construction algorithms take $O(n\log \sigma + \min\{m\log n, m + n^2\})$ time on general deterministic graphs \cite{DBLP:conf/cpm/AlankoCCKMP24}, and $O(n\log\sigma + m)$ if the graph is a Wheeler \emph{semi-DFA}, which requires to have exactly one source. We do not make the assumption of a single source, thus we can only use the former algorithm, which is within the time bound for our $O(n^4\log k)$ counting algorithm. For our $O(|W|k)$ counting algorithm, we present a tailored \textsf{LCS} construction that is constructed during the $k$-mer counting algorithm at no extra cost, but with values capped to $k$, which is enough for our purposes.

\subsection{Counting $k$-mers in $O(|W|k)$ time}\label{sec:counting_kmers_Wk}
Let $S_v^k$ be the set of distinct $k$-mers reaching a vertex $v$. The algorithm relies on two properties of deterministic Wheeler graphs. First, the incoming edges to a node all have the same label (a direct consequence of the first condition in Definition \ref{def:wheeler_graph}), a property called \textit{input-consistency}. We denote with $\lambda(v)$ the label on the incoming edges to $v$, setting $\lambda(s)=\varepsilon$ for any source $s$. Second, we have the following well-known property on the colexicographic order of incoming $k$-mers, which we prove in appendix \ref{appendix:lemma6}:
\begin{restatable}{lemsix}{lemmashared}\label{lemma:kmer_order}
For two nodes $u,v$ in a DWG, if $u < v$, then for any two $k$-mers $\alpha$ and $\beta$ such that $\alpha \leadsto u$ and $\beta \leadsto v$, we have $\alpha \preceq \beta$.
\end{restatable}

\begin{corollary}\label{corollary:max_min}
    If two different nodes $u < v$ of the DWG share a $k$-mer, then they share $\max_{u}^k=\min_v^k$. $\qed$
\end{corollary}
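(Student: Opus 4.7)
The plan is to observe that the corollary follows immediately from Lemma~\ref{lemma:kmer_order} by a two-sided application. Suppose $u < v$ and that $\gamma$ is a $k$-mer with $\gamma \leadsto u$ and $\gamma \leadsto v$; the goal is to show $\gamma = \max_u^k = \min_v^k$.

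First I would show $\gamma = \max_u^k$. Let $\alpha$ be any $k$-mer with $\alpha \leadsto u$. Applying Lemma~\ref{lemma:kmer_order} to $\alpha \leadsto u$ and $\gamma \leadsto v$ (using $u < v$) yields $\alpha \preceq \gamma$. Since this holds for every incoming $k$-mer of $u$, and $\gamma$ itself reaches $u$, $\gamma$ is indeed the colexicographic maximum among $k$-mers reaching $u$, i.e.\ $\gamma = \max_u^k$.

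Symmetrically, I would show $\gamma = \min_v^k$. Let $\beta$ be any $k$-mer with $\beta \leadsto v$. Applying Lemma~\ref{lemma:kmer_order} to $\gamma \leadsto u$ and $\beta \leadsto v$ gives $\gamma \preceq \beta$. Since this holds for every incoming $k$-mer of $v$ and $\gamma \leadsto v$, we conclude $\gamma = \min_v^k$. Combining the two equalities yields $\max_u^k = \gamma = \min_v^k$, as claimed.

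There is essentially no obstacle here: this is a direct corollary of Lemma~\ref{lemma:kmer_order}. The only subtlety worth flagging is that Lemma~\ref{lemma:kmer_order} applies even when the two $k$-mers in question coincide (the lemma's statement uses $\preceq$, not $\prec$), which is exactly what is needed when we instantiate $\alpha = \gamma$ or $\beta = \gamma$. Once that is noted, the proof is a one-line application on each side.
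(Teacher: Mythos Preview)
Your proposal is correct and is exactly the intended argument: the paper omits the proof entirely (the $\qed$ immediately follows the statement), treating it as an immediate consequence of Lemma~\ref{lemma:kmer_order}, and your two-sided application of that lemma is precisely how one unpacks it.
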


%\begin{corollary}\label{corollary:max_min}
%If a graph $G$ is a Wheeler DFA, then for any pair of nodes $u < v$, it holds that $\max S_u^\ell \leq \min S_v^\ell$ for all $\ell \leq k$. %$\qed$
%\end{corollary}
If the DWG has sources, some vertices may not have incoming $k$-mers, but they always have an \textit{infimum} and a \textit{supremum} string, as defined in \cite{DBLP:conf/cpm/AlankoCCKMP24}. We also prove that, if two vertices $u<v$ share a $k$-mer, then $\max^k_u$ and $\min^k_v$ are suffixes of $\sup_u$ and $\inf_v$, respectively. If they do not share a $k$-mer, that may not hold in general. This makes it possible to check for shared $k$-mers using the original \textsf{LCS} array.
\begin{restatable}{lem}{lemmaorder}\label{lemma:lcs_and_kmers}
    Two different nodes $u < v$ of a DWG share a $k$-mer if and only if $\mathsf{LCS}(\sup_u,\inf_v)\geq k$. In that case, $\max^k_u$ is a suffix of $\sup_u$, and $\min^k_v$ is a suffix of $\inf_v$.
\end{restatable}
We leave the proof in appendix \ref{appendix:lemma6}. Now, we show a recurrence to build the sets $S_v^\ell$ for all $\ell = 1,\ldots,k$. Let $\pred(v)$ be the set of nodes that have an outgoing edge to $v$. Then, for a non-source vertex, we have
\begin{equation}\label{eq:union_formula}
S_v^\ell = \bigcup_{u \in \pred(v)} \{ \alpha \cdot \lambda(v) \mid \alpha \in S_u^{\ell-1} \},\end{equation}
with the base case $S_v^0 = \{\varepsilon\}$ for all $v \in V$, where $\varepsilon$ denotes the empty string. For sources, $S_v^\ell=\emptyset$ for $\ell>0$.

However, at this point, we are not interested in enumerating the distinct $\ell$-mers, only counting them. Let $C_\ell(v) = |S_v^\ell|$. Let $u_1, \ldots, u_d$ be the in-neighbours of $v$ sorted in the Wheeler order. For $\ell=0$, we have $C_\ell(v)=1$ for all $v$. Then, for $\ell>0$, sources have $C_\ell(s)=0$; and for non-source vertices, the size of the union in Equation \ref{eq:union_formula} can be written as:
\[C_\ell(v) = \sum_{i = 1}^d C_{\ell - 1}(u_i) - \sum_{i=2}^d |S_{u_{i-1}}^{\ell-1} \cap S_{u_i}^{\ell-1}| = \sum_{i = 1}^d C_{\ell - 1}(u_i) - \sum_{i=2}^d [\max S_{u_{i-1}}^{\ell-1} = \min S_{u_i}^{\ell-1}], \]
where $[P]$ is the Iverson bracket that evaluates to 1 if the predicate $P$ is true, otherwise to 0. The last equality holds thanks to Corollary \ref{corollary:max_min}. Now the only difficult part in computing the recurrence is determining whether $\max S_{u_{i-1}}^{\ell-1} = \min S_{u_i}^{\ell-1}$. As discussed before, if we have the LCS array available, by Lemma \ref{lemma:lcs_and_kmers}, we can directly check it in the array. Otherwise, there exists a quickly computable recurrence for that as well.
Let $\ell\geq 1$. For non-source nodes $u, v$, let $\mathsf{share}_\ell(u,v) = [\max S_{u}^{\ell} = \min S_{v}^{\ell}]$, and let $\mathsf{share}_\ell(u,v)=0$ if $u$ or $v$ are sources. Furthermore, for non-source $v$ define \begin{equation*}\ILCS_{\geq \ell}(v)=[|\text{inf}_v|\geq\ell]\cdot [|\text{sup}_v|\geq\ell]\cdot[\min S^\ell_v = \max S^\ell_v],\end{equation*} and $\ILCS_{\geq\ell}(s)=0$ for a source $s$. We have that $\mathsf{share}_\ell(u,v) = 1$ iff $\lambda(u) = \lambda(v)$ and $\mathsf{share}_{\ell-1}(\max \pred(u), \min \pred(v)) = 1$, with the base case $\share_{0}(u,v) = 1$ for all $u,v$. Similarly, we also have that $\ILCS_{\geq\ell}(v)=1$ iff $\ILCS_{\geq\ell-1}(\min \pred(v))=1$, $\ILCS_{\geq\ell-1}(\max\pred(v))=1$ and $\share_{\ell-1}(\min\pred(v),\max\pred(v))=1$, with the base case $\ILCS_{\geq 0}(v)=1$ for all $v$. To compute these without tabulating all $O(|V|^2)$ input pairs $u,v$ to $\share_\ell$, we exploit the formula 
\begin{align}\share_\ell(u,v) = \min\{\min_{u<w\leq v} \share_\ell(w-1, w), \min_{u<w<v} \ILCS_{\geq\ell}(w)\}.\label{equation:share}\end{align}
With this, it suffices to compute $\share_\ell$ only for adjacent inputs $(v_{i-1}, v)$, and build a range minimum data structure \cite{Fischer-RMQ,original_rmq} over both the $\ILCS_{\geq \ell}(v)$ and $\share_\ell(v_{i-1}, v_i)$ arrays, to be able to resolve $\share_\ell(u,v)$ in constant time for any input pair. Proceeding level by level $\ell = 1,2,\ldots,k$, we obtain:

\begin{theorem}\label{theorem:counting_dp}
   We can compute the number of distinct $k$-mers in the state graph $(V,E)$ of a DWG in total time $O((|V| + |E|)k)=O(|W|k)$. \qed
\end{theorem}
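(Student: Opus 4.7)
The plan is a straightforward level-by-level dynamic program, exploiting the recurrences and range minimum observations already set up in the section. We process $\ell = 0, 1, \ldots, k$ and, at each level $\ell$, maintain three arrays: the counts $C_\ell(v)$ for every vertex, the adjacent-share array $\share_\ell(v_{i-1}, v_i)$ over Wheeler-consecutive pairs, and the internal-match array $\ILCS_{\geq \ell}(v)$. The base cases $C_0(v) = 1$, $\share_0(u,v) = 1$ and $\ILCS_{\geq 0}(v) = 1$ initialize the recursion in $O(|V|)$.

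At level $\ell \geq 1$, the three arrays from level $\ell - 1$ are accompanied by two RMQ structures: one over $\share_{\ell-1}(v_{i-1}, v_i)$ and one over $\ILCS_{\geq \ell-1}(v)$, each of which was built in $O(n)$ time in the previous iteration. Using the formula $\share_\ell(u,v) = \min\{\min_{u<w\le v}\share_\ell(w-1,w),\min_{u<w<v}\ILCS_{\geq\ell}(w)\}$, once the adjacent $\share_\ell$ and per-vertex $\ILCS_{\geq \ell}$ arrays for the current level exist, an arbitrary query $\share_\ell(u,v)$ is resolved in $O(1)$ time. Consequently, the $\share_\ell$ and $\ILCS_{\geq \ell}$ recurrences, which each reference a single $\share_{\ell-1}$ and a constant number of $\ILCS_{\geq \ell-1}$ values, are evaluated in $O(1)$ per vertex, giving $O(n)$ total to populate both new arrays. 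Fresh RMQ structures on them are then rebuilt in $O(n)$.

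With $\share_{\ell-1}$ queries available in constant time, we compute $C_\ell(v)$ for each non-source $v$ by walking its in-neighbors $u_1 < \cdots < u_d$ in Wheeler order, summing $\sum_{i} C_{\ell-1}(u_i)$ and subtracting one for each adjacent pair $(u_{i-1}, u_i)$ with $\share_{\ell-1}(u_{i-1}, u_i) = 1$; this takes $O(\indeg(v))$ work. Sources are set to $C_\ell(s)=0$. Summed over all vertices, the work of one level is $O(|V| + |E|) = O(|W|)$, and multiplying by the $k$ levels yields $O(|W|k)$ total time. The final count is $\sum_{v \in V} C_k(v)$.

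The main obstacle I would expect, and the one the paper already addresses in the paragraph preceding the theorem, is that we do not assume the LCS array is available in advance: the $O(m + n\log\sigma)$ construction from prior work applies only to Wheeler semi-DFAs. The workaround is that the $\share_\ell$ and $\ILCS_{\geq \ell}$ arrays we are computing \emph{are} precisely the information needed to cap LCS values at $k$, so the LCS is produced as a byproduct of the dynamic program within the same $O(|W|k)$ budget. The correctness of the count follows from Equation \ref{eq:union_formula}, the inclusion--exclusion collapse afforded by Corollary \ref{corollary:max_min} (which ensures that pairwise intersections of $S_{u_i}^{\ell-1}$ among the in-neighbors reduce to checking equality of extreme elements), and the recurrences for $\share_\ell$ and $\ILCS_{\geq \ell}$ established above.
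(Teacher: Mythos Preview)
Your proposal follows exactly the paper's level-by-level dynamic program: maintain $C_\ell$, the adjacent $\share_\ell$ array, and $\ILCS_{\geq\ell}$, support arbitrary $\share_{\ell-1}$ queries via two RMQ structures, and update each in $O(1)$ per vertex (respectively $O(\indeg(v))$ for $C_\ell(v)$). The time analysis and the remark about recovering capped LCS values as a byproduct also match the paper.

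There is, however, one genuine slip in your last step. The total number of distinct $k$-mers in the graph is \emph{not} $\sum_{v\in V} C_k(v)$: a $k$-mer may reach several vertices, which is precisely the phenomenon that $\share_k$ records. By Lemma~\ref{lemma:kmer_order} the sets $S_v^k$ are ordered blocks in colex order, and by Corollary~\ref{corollary:max_min} two consecutive blocks overlap in at most one element. Hence the same inclusion--exclusion you already used inside the recurrence for $C_\ell(v)$ applies at the top level as well:
\[
\Bigl|\bigcup_{v\in V} S_v^k\Bigr| \;=\; \sum_{i=1}^{n} C_k(v_i)\;-\;\sum_{i=2}^{n}\share_k(v_{i-1},v_i).
\]
Both sums are available at the end of level $k$ at no extra asymptotic cost, so the fix is immediate, but as written your final count overcounts every black $k$-mer once per extra vertex in its interval.
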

Note that during the procedure to compute $\mathsf{share}_\ell$ and $\ILCS_{\geq\ell}$ for all $\ell$, we can also infer the values of the \ILCS and \ELCS array up to $k$, and keep all needed information in $O(n)$ space.
\subsection{Counting $k$-mers in $O(n^4\log k)$ time in DWGs}
Now, we present a new algorithm for $k$-mer counting that has a logarithmic dependency on $k$, based on the technique of \emph{prefix-doubling} \cite{manber1993suffix}. We obtain a running time of $O(n^4\log k)$, which is better than our previous algorithm for $k=\omega(n^2\log n)$. We suspect that the upper bound could be improved, but we were not able to. We give now an intuition for the difficulty of this problem.

Our starting point is the algorithm by Kim, Olivares and Prezza \cite{kim_et_al:LIPIcs.CPM.2023.16}, that computes the total order of the infima of every vertex by prefix doubling. Let $suf_\ell(\inf_v)$ denote the suffix of length $\ell$ of $\inf_v$. For every vertex $v$ and $\ell=2^i$ with $i=1,\ldots,\log k$, their algorithm maintains a set $P_\ell(v)$ of the vertices $u$ such that $suf_\ell(\inf_v)$ starts at $u$ and ends at $v$. Then, it computes $P_{2\ell}(v)$ as the union of the $P_\ell(u)$ such that $u\in P_\ell(v)$, and $u$ has the smallest $suf_\ell(\inf_u)$ among all $u\in P_\ell(v)$. It observes that the mentioned union is disjoint because of determinism, and thus computing $P_{2\ell}(v)$ from $P_\ell(v)$ takes $O(n)$ time. Across all vertices and lengths up to $k$, the running time is $O(n^2\log k)$, given that a sorting step is done by radix sort in linear time. 

Our problem is significantly more complicated since we are interested not only in the infima, but \emph{all} the strings reaching a node. We assume that we already have a total order on the nodes, but the problem remains challenging. First, we decompose the problem into computing $C(v)$ for each $v \in V$. From these, it is easy to obtain the $k$-mer count of the whole graph because adjacent nodes in the Wheeler order share are most one $k$-mer, which we can detect using the \ILCS array like in the previous section. Further, we decompose computing $C_k(v)$ into counts $C_k(u,v)$ for all $u \in V$, where $C_k(u,v)$ denotes the number of $k$-mers starting from $u$ and ending at $v$. The counts $C_k(u,v)$ also turn out easy to compute by squaring the adjacency matrix repeatedly, but for $C_k(v)$ we must be very careful to avoid double counting between the $C_k(u,v)$ of different starting nodes $u$. Let $S_k(u,v)$ denote the $k$-mers starting from $u$ and ending at $v$. We have $C_k(v) = |\bigcup_{u \in V} S_k(u, v)|$. This reduces the problem into computing sizes of unions of the sets $S_k(u, v)$, or computing their intersections to subtract double counting. This is where the core difficulty lies because despite sharing the end point $v$, the shape of the intersection $S_k(u, v) \cap S_k(w, v)$ can be complicated and arbitrarily large.

%In our case, we have the total order of the vertices, but we need the count a possibly exponential number of $k$-mers reaching the vertices, not just the infima. Let $S_{\ell}(u, v)$ be the set of distinct $\ell$-mers that start at $u$ and end at $v$, and let $C_{\ell}(u, v)=|S_\ell(u,v)|$. We can compute all $C_\ell(u,v)$ easily by squaring the adjacency matrix repeatedly. We can now write $C_\ell(v) = |\bigcup_{u \in V} S_\ell(u, v)|$, but the size of the union is hard to compute because for any two distinct $u_a, u_b\in V$, $S_\ell(w_a, v)\cap S_\ell(w_b, v)$ may be arbitrarily large, not just $0$ or $1$ like when we are only computing infima.

Therefore, we devise a novel recursive method to compute unions of $S_\ell(u,v)$ for \emph{ranges} of $u$. Define $T_\ell([u_i,u_j], v)$ as $|\bigcup_{w\in [u_i,u_j]} S_{\ell}(w,v)|$ for any interval of vertices $[u_i,u_j]$ and any $v$. 
%Ideally, we would only need to compute $T_\ell([u_1,u_n],v)=C_\ell(v)$ for every vertex. % This is already said above in another way
If all the $(\ell/2)$-mers reaching two vertices $w_a,w_{a+1}$ are distinct, the number of $\ell$-mers reaching $v$ and passing through $w_a,w_{a+1}$ is $C_{\ell/2}(v,w_a)\cdot T_{\ell/2}([u_1,u_n],w_a)+C_{\ell/2}(v,w_{a+1})\cdot T_{\ell/2}([u_1,u_n],w_{a+1})$, because the first halves are distinct. However, if it happens that $w_a,w_{a+1}$ share one $(\ell/2)$-mer $\beta$, the number of $\ell$-mers reaching $v$ with prefix $\beta$ depends on the size of the union between $S_{\ell/2}(w_a,v)$ and $S_{\ell/2}(w_{a+1},v)$, and thus we recurse to length $\ell/2$. 
\begin{figure}
    \centering
    \begin{tabular}{c|c|c}
    \multirow{-4}{0.2\textwidth}{
    \resizebox{2.5cm}{3.0cm}{
    \begin{tikzpicture}
	\begin{pgfonlayer}{nodelayer}
		\node [style=Empty] (0) at (0, 1.75) {1};
		\node [style=Empty] (1) at (-1, 0.75) {2};
		\node [style=Empty] (2) at (1, 0.75) {3};
		\node [style=Empty] (3) at (-1, -0.5) {4};
		\node [style=Empty] (4) at (1, -0.5) {5};
		\node [style=Empty] (5) at (0, -1.5) {6};
	\end{pgfonlayer}
	\begin{pgfonlayer}{edgelayer}
		\draw [style=Right] (1) to node [left] {$a$} (0);
		\draw [style=Right] (2) to node [right] {$a$} (0);
		\draw [style=Right] (3) to node [left] {$a$} (1);
		\draw [style=Right] (4) to node [right] {$a$} (2);
		\draw [style=Right] (5) to node [left] {$a$} (3);
		\draw [style=Right] (5) to node [right] {$b$} (4);
		\draw [style=Right] (0) to node [right] {$c$} (5);
	\end{pgfonlayer}
\end{tikzpicture}}} & \multirow{-4.5}{0.3\textwidth}{\includegraphics[width=0.9\linewidth]{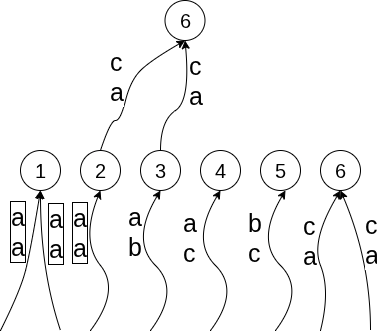}} & \makecell[l]{\begin{tabular}{|c|c|c|}
    \hline
    $w$ & $T^W_2([1,6],w)$ & $T_2([1,6],w)$\\
    \hline
    1 & 0 & 0 \\
    \hline
    2 & 0 & 0 \\
    \hline
    3 & 1 & 0 \\
    \hline
    4 & 1 & 1 \\
    \hline
    5 & 1 & 1 \\
    \hline
    6 & 1 & 1 \\
    \hline
\end{tabular} \\
$B_2=\{[1,2]\}, R([1,6],aa)=[1,2]$}
 \\
    \end{tabular}
    \caption{Example of a Wheeler graph, with an illustration of how we compute the $4$-mer counts of vertex $6$. The $2$-mers encapsulated with a box are black, and the rest are white. With the information on the table, we can calculate $T_4([1,6],6) = T^W_2([1,6],3)\cdot C_2(3,6)+T_2([1,2],6)=2$.}
    \label{fig:placeholder}
\end{figure}
\subsubsection{Recursive algorithm}
Because of input consistency, $C_1(v)=[\deg_{in}(v)>0]$. For $\ell>1$, we have $C_\ell(v)=T_\ell([u_1,u_n], v)$, and we give a recursive formulation for $T_{\ell}([u_i,u_j],v)$. First, we introduce some notation. Given $\ell\geq 1$, we call an $\ell$-mer \emph{white} if all its occurrences arrive at the same vertex. Otherwise, we call the $\ell$-mer \emph{black}. Let $T_\ell^W([u_i,u_j], v)$ be the number of white $\ell$-mers starting from any vertex in $[u_i,u_j]$ and reaching $v$, and let $B_\ell$ be the set of all black $\ell$-mers of length $\ell$. Finally, let $R([u_i,u_j], \beta)$ denote the maximal interval of vertices that we can reach starting from $[u_i,u_j]$ and following $\beta$. Note that for two distinct black $\ell$-mers $\beta_1\prec\beta_2$ and any $I\subseteq [u_1,u_n]$, $|R(I, \beta_1)\cap R(I, \beta_2)|\leq 1$.

We compute $T_\ell([u_i,u_j],v)$ by dividing the number of $\ell$-mers reaching $v$ from $[u_i,u_j]$ into those whose first half is white, and those whose first half is black:
\begin{gather}
T_\ell([u_i,u_j], v)=\sum_{w\in V}T^W_{\ell/2}([u_i,u_j],w)\cdot C_{\ell/2}(w,v)+\sum_{\beta\in B_{\ell/2}} T_{\ell/2}(R([u_i,u_j],\beta), v)\label{equation:t_recursion}
\end{gather}
The first term of the sum avoids double counting because all the first white halves are different. The second term counts the number of $\ell$-mers that reach $v$ with each black $(\ell/2)$-mer as a prefix, starting from $[u_i,u_j]$.
%In the section below we show how to compute $T^W_{\ell/2}([u_i,u_j],v)$ from $T_{\ell/2}([u_i,u_j],v)$ by subtracting the possible one or two black $(\ell/2)$-mers that may originate from $[u_i,u_j]$. We also show how to find the intervals $R([u_i, u_j], \beta)$ for $\beta\in B_{\ell/2}$. This part has been moved to the appendix

To fulfill the recursion, we run into the issue that for two $(\ell/2)$-mers $\beta$ and $\gamma$, $R(\beta)\subseteq R(\beta\gamma)$, and we cannot easily infer the value of $T_{\ell}(R(\beta\gamma),v)$ from $T_{\ell}(R(\beta),v)$. We settle for computing $T_{\ell}([u_i,u_j],v)$ for all $O(n^2)$ intervals $[u_i,u_j]\subseteq [u_1,u_n]$ at every level of the recursion, and for every $v\in V$. This requires a table of size $O(n^3)$ at every level of the recursion, where each value takes $O(n)$ time to compute with formula \ref{equation:t_recursion} once the terms inside the sum have been computed. 

Once we have found all $T_{\ell/2}([u_i,u_j], w)$, the terms $T^W_{\ell/2}([u_i,u_j], w)$ for the next recursion can be computed by subtracting the black $k$-mers reaching the vertices, which we can check from the sets $P_{\ell/2}(v)$ as shown in appendix \ref{appendix:n4logk}. There, we also show how to compute the intervals of $B_\ell$ and $R([u_i, u_j], \beta)$ from the \textsf{LCS} array and the sets $P_{\ell/2}(v)$. We achieve this by building a data structure of size $O(n^2)$ that allows us to compute each $T^W_\ell([u_i,u_j],v)$ from $T_\ell([u_i,u_j],v)$ in $O(1)$ time, and to find the intervals $B_\ell$ in $O(n)$ time, and all $R([u_i,u_j],\beta)$ in a total of $O(n^3)$ time.

With $\log k$ recursion steps, we arrive at an algorithm with $O(n^4\log k)$ running time, and $O(n^3)$ space. The space complexity can be achieved by only keeping in memory the table from the last iteration. 
All together, we have shown the following theorem:
\begin{theorem}
    We can compute the number of distinct $k$-mers of a DWG $W=(V, E)$ in time $O(n^4\log k)$, where $n=|V|$.
\end{theorem}

\section{$k$-mer preserving graph transformations to Wheeler graphs}\label{sec:k_wheelerization}
The algorithms in Section \ref{sec:kmer_counting} are of limited usability since they apply only to deterministic Wheeler graphs. In this Section we discuss how to transform some more general graphs into deterministic Wheeler graphs.

An \emph{layered DAG} with $k$ layers is a directed acyclic graph that can be partitioned into $k$ disjoint subsets of vertices $V_0, \ldots V_{k-1}$ such that edges exists only from vertices in $V_i$ into $V_{i+1}$ for $i = 0, \ldots, k-2$. For any graph, we can build the $k$ times unfolded DAG, which preserves the $k$-mers in the graph:
\begin{definition}\label{def:k_times_unfolded_DAG}
    The $k$ times unfolded DAG of a graph $G$ is a layered DAG $G_k$ with $k+1$ layers such that the vertices on each layer $V_i$ are a copy of the vertices of $G$, and the edge set is defined so that for every edge $(u,v,c)$ in $G$ there are edges from the copy of $u$ in layer $i$ to the copy of $v$ in layer $i+1$ for all $i = 0, \ldots k-1$.
\end{definition}
\noindent We can turn this into a DFA by introducing a new initial state with epsilon-transitions into the first layer, and determinizing with the classic subset construction. Now, since every $k$-mer corresponds to a distinct path from the initial state to the final layer, we can count $k$-mers by counting the distinct paths to each node on the final layer of the DAG, which is easily done with a dynamic programming algorithm that processes the nodes in topological order.

However, this does not exploit overlaps between $k$-mers. For deterministic acyclic graphs that have a source node that can reach all nodes, we may use the Wheelerization algorithm of Alanko et al. \cite{alanko2020regular} that computes the smallest equivalent Wheeler DFA in $O(n \log n)$ time in the size of the output. More generally, if the input is the state graph $G$ of any DFA recognizing a Wheeler language, the minimum equivalent Wheeler DFA can be constructed in $O(|G|^3 n \log n)$ time, where $n$ is the size of the output \cite{d2023ordering}.

Another option is to build the de Bruijn graph of the $k$-mers, which is always Wheeler \cite{gagie2017wheeler}. Unfortunately, this does not help much with counting the $k$-mers because building the de Bruijn graph takes time proportional to the number of $k$-mers, since there is one node per $k$-mer. That is, building the de Bruijn graph amounts to a brute force enumeration of all $k$-mers. However, the de Bruijn graph is not necessarily the smallest graph for a given set of $k$-mers: For example, the Wheeler DFA minimization algorithm of Alanko et al. \cite{alanko2022linear}, run on the de Bruijn graph, can reduce the size of the graph while maintaining Wheelerness. Building this compressed graph directly without first building the de Bruijn graph would be useful, but seems to be  nontrivial. We leave this open for future work.

\section{The de Bruijn Graph of a Deterministic Wheeler Graph}\label{sec:dbg}

In this section, we define a representation of the node-centric de Bruijn graph of the $k$-mers of a DWG $W$, building on the definitions of black and white $k$-mers and the counting algorithms introduced in the previous section.
We denote the de Bruijn graph by $dBg$. First, we describe the data structure contents, then we describe the algorithms to navigate the de Bruijn graph given this data structure, and then we analyse their time and space in bits. Finally we show how to build it in time and space $O(n_k + |W|k\log(\max_{1\leq\ell\leq k} n_\ell) + \sigma\log m)$.

\subsection{Data structure}
Before presenting the data structure components, we define some new concepts. 
\begin{itemize}
    \item A \emph{left-extension} of an $\ell$-mer $\alpha\in\mathcal{L}(W)$ is an $(\ell+1)$-mer $c\alpha\in\mathcal{L}(W)$, that has $\alpha$ as a suffix. Note that all left-extensions of a white $k$-mer are also white. 
    \item The \emph{left-contraction} of an $\ell$-mer $\alpha$ is the maximal interval of vertices that $\alpha[2..\ell]$ reaches. The left-contraction of a white $\ell$-mer may be black.
    \item For a vertex $v$, $E_\ell(v)$ is the array containing the number of distinct left-extensions of each $\ell$-mer reaching $v$, sorted by their colex order.
    \item For a vertex $v\in V(W)$, the intervals $\mathsf{left\_interval}_\ell(v)$ and $\mathsf{right\_interval}_\ell(v)$ are defined as:
    \begin{itemize}
        \item $\mathsf{left\_interval_{\ell}}(v)=  \{\text{maximal }[v_a, v_b]\subseteq V(W) : (\forall w\in[v_a, v_b])(\min^{\ell}_v\leadsto w)\}$.
        \item $\mathsf{right\_interval_{\ell}}(v)=  \{\text{maximal }[v_a, v_b]\subseteq V(W) : (\forall w\in[v_a, v_b])(\max^{\ell}_v\leadsto w)\}$
    \end{itemize}
\end{itemize}

\noindent We give a few more details on the intervals. If any of $\min^{\ell}_v$ or $\max^{\ell}_v$ does not exist, we define the respective left or right interval to be empty. Furthermore, for an interval of vertices $[v_a,v_b]$ where $v_a<v_b$, we note that $\max^{\ell}_{v_a}=\min^{\ell}_{v_b}$ by Corollary \ref{corollary:max_min}, and this $\ell$-mer is black.
%In this case, we call the left endpoint \emph{open} if $\min^{k-1}_{v_a}\neq \max^{k-1}_{v_a}$, otherwise \emph{closed}; similarly, we call the right endpoint \emph{open} if $\min^{k-1}_{v_b}\neq\max^{k-1}_{v_b}$, otherwise \emph{closed}. 
We represent the intervals by the two endpoints $v_a, v_b$. In the appendix we show how to compute the intervals for all $v$ and all $1\leq \ell\leq k$ in $O(nk)$ time, by scanning the vertices left-to-right and using the $\mathsf{share}_\ell$ function in each level.
\subparagraph*{Components.} The data structure is comprised of the following parts:
\begin{itemize}
    \item The standard Wheeler index for $W$ \cite{gagie2017wheeler}.
    \item For all $v\in V(W)$, the $(k-1)$-mer counts $C_{k-1}(v)$.
    \item For all $v\in V(W)$ and all $1\leq\ell\leq k$, with $pred(v)=(u_1,\ldots,u_d)$, a cumulative array $\overline{C_\ell(v)}[1..d]$ where $\overline{C_\ell(v)}[j]$ is the total number of $(\ell-1)$-mers reaching $u_1\ldots u_{j-1}$ that are smaller than $\min^{\ell-1}_{u_j}$.
    \item For all non-sink vertices $v$, a predecessor data structure over the array of cumulative sums of $E_{k-1}(v)$, named $\overline{E_{k-1}(v)}[1..C_{k-1}(v)]$. Given a number $j$ of distinct left-extensions, the query returns the smallest index $t$ such that $E_{k-1}(v)[1]+\ldots+ E_{k-1}(v)[t]\geq j$. We implement this data structure using constant-time rank and select operations \cite{JacobsonPhDRank, ClarkSelect}.
    \item For all $v\in V(W)$, the intervals $\mathsf{left\_interval}_{k-1}(v)$ and $\mathsf{right\_interval}_{k-1}(v)$, stored as described above.
\end{itemize}

Now, we describe how to implement the navigation queries with these data structures, and afterwards we analyse their space and time complexity.

\subsection{de Bruijn Graph Navigation}
The operations we implement return the \emph{handle} of a $k$-mer, listing outgoing labels from a $k$-mer, and performing forward traversal with an outgoing label.

\subparagraph*{Handle of a $k$-mer.} The handle of a $k$-mer $\alpha$ is a triple $(u, v, j)$, where $[u, v]$ is the maximal interval of $V(W)$ that $\alpha$ reaches, and $j$ is the rank of $\alpha$ in the colex order of $k$-mers reaching the left endpoint $u$. We obtain $u$ and $v$ via the standard forward-search on the Wheeler index, but we must keep track of the rank $j$. For $0\leq \ell\leq k,$ let $[u^\ell,v^\ell]$ be the interval of vertices that we obtain after matching $\alpha[1..\ell]$, starting from $[u^0,v^0]=[v_1,v_n]$. At any step where $u^\ell<v^\ell$, $j=C_\ell(u^\ell)$ because $\alpha[1..\ell]=\max^\ell_{u^\ell}$. If $u^\ell=v^\ell$ for some $\ell$, we update $j$ as follows. We find the rank $t$ of $u^{\ell-1}$ among the in-neighbours of $u^\ell$ with the Wheeler index, and we set $j\leftarrow \overline{C_\ell(u^\ell)}[t] + j$. %The operations with the Wheeler index take $O(\log \sigma)$ time, thus the whole procedure takes $O(k\log \sigma)$ time.

\subparagraph{Listing outgoing labels.} Given a handle $(u, v, j)$, with the standard Wheeler graph index we can output the distinct outgoing labels in $O(\log \sigma)$ time per label, using rank and select operations \cite{gagie2017wheeler}. 

\subparagraph{Forward traversal. }Given the handle $(u,v,j)$ of a $k$-mer $b\alpha$, $|\alpha|=k-1$, and a label $c$, we can return the handle of $\alpha c$ by first performing a left contraction on $b\alpha$, which returns an interval $[u'v'],$ and then matching $c$ from that interval with the Wheeler index and updating the rank (this is done in the same way as when we return the handle of a full $k$-mer).

To perform a left contraction we distinguish two cases. If $u<v$, then the left contraction is equal to $\mathsf{right\_interval}_{k-1}(u)$, and the rank is $C_{k-1}(u)$. If $u=v$, then the rank of $\alpha$ is equal to $select(E_{k-1}(u), j)$. If the new rank is $1$ or $C_{k-1}(u)$, then the interval is $\mathsf{left\_interval}_{k-1}(u)$ or $\mathsf{right\_interval}_{k-1}(u)$, respectively. Otherwise, $\alpha$ is a white $k$-mer, and the interval is $[u, u]$. %The operation takes $O(\log \sigma)$ time from the matching of the last character with the Wheeler index.

\subsection{Analysis.}
We store the $\ell$-mer counts $C_{\ell}(v)$ and the arrays $\overline{C_{\ell}(v)}$ as arrays of integers. The required integer size is given by the maximum number of $\ell$-mers, across all $\ell=1..k$. We denote by $n_\ell$ the number of $\ell$-mers of $W$, and with this notation, we obtain that we require $nk\log(\max_{1\leq\ell<k} n_\ell)$ bits to store the counts $C_{\ell}(v)$. The integers in $\overline{C_\ell(v)}$ grow up to $C_{\ell+1}(v)$, thus we require $2mk\log(\max_{2\leq\ell\leq k} n_\ell)$ bits to represent the arrays. The intervals take up $2nk(\log n + 2)$ bits. The Wheeler graph index is representable in $O(|W| + m\log\sigma+\sigma\log m)$ bits \cite{gagie2017wheeler}. 

Finally, to build the predecessor data structure on $\overline{E_{k-1}(v)}$, we encode $E_{k-1}(v)$ in unary, for example by $\bigodot_{1\leq j\leq |E_{k-1}(v)|}10^{E_{k-1}(v)[j]}$, where $\bigodot$ is the operator for concatenation of bit sequences. We build the classic constant rank and select data structures on it \cite{JacobsonPhDRank, ClarkSelect}. To analyse the space, observe that all left extensions counted by all arrays $E_{k-1}(v)$ for all $v$ represent distinct $k$-mers, except for at most $2n$ shared $k$-mers between the vertices. Therefore, the sum of all zeros in all $E_{k-1}(v)$ is bounded by $n_k+2n=|V(dBg)|+2n$. The number of ones in $E_{k-1}(v)$ corresponds to the number of $(\ell-1)$-mers arriving at $v$. Notice, however, that we only compute $E_{k-1}$ for non-sink vertices, thus for each distinct $(k-1)$-mer represented by a $1$-bit we have a distinct $k$-mer in $V(dBg)$ by right-extension. Thus, we get at most $|V(dBg)|+2n$ ones. The space becomes $O(n_k + |W|k\log(\max_{1\leq\ell\leq k} n_\ell) + \sigma\log m)$~bits\footnote{Here we used $\sigma\leq n$ \cite{gibney_complexity_2022}.}.

The time needed for the operations to obtain the handle of a $k$-mer are bounded by $O(k)$ Wheeler graph operations, which take $O(\log \sigma)$ each. Listing outgoing labels can be done in $O(\log\sigma)$ time per label. The forward traversal requires one Wheeler graph matching, and a predecessor on $\overline{E_{k-1}(v)}$, which can be implemented now with constant time rank and select on $E_{k-1}(v)$. Therefore, we arrive at the following theorem:

\begin{theorem}
    There exists a data structure for simulating navigation of the de Bruijn graph $dBg$ of the $k$-mers of a Wheeler graph $W$, which uses $O(n_k + |W|k\log(\max_{1\leq\ell\leq k} n_\ell) + \sigma\log m)$ bits of space. It implements retrieving the handle of a $k$-mer in $O(k\log\sigma)$ time, listing of outgoing labels in $O(\log\sigma)$ time per label, and forward traversal in $O(\log \sigma)$ time.
\end{theorem}

\subsection{Computation of the data structure}
We have discussed how to obtain the $k$-mer counts of $W$ in section \ref{sec:kmer_counting}, from which we obtain $C_{\ell}(v)$ and $\overline{C_\ell(v)}$, and we obtain the left and right intervals as shown in appendix \ref{appendix:b}. The Wheeler graph index is as in \cite{gagie2017wheeler}. Computing the arrays $E_{k-1}(v)$ is more challenging though, because its size is proportional to the number of $(k-1)$-mers of $W$, which can be exponential. We develop a recursive formula for $E_\ell$ in terms of $E_{\ell-1}$, which if it were implemented directly, it would lead to a $O((|W|+|dBg|)k)$ solution. Then, we show a smart dynamic program that implements it and computes $E_{k-1}(v)$ in $O(|W|k + n_k)$ time and space.

\subsubsection{Computation of $E_{k-1}(v)$}\label{subsection:computing_left_extensions}

Let $\alpha$ be an $\ell$-mer, and denote by $LE(\alpha, v)$ the number of left-extensions of $\alpha$ that reach $v$. First, we observe that all the $\ell$-mers reaching $v$ can be found by appending $\lambda(v)$ to the set of $(\ell-1)$-mers reaching the vertices of $pred(v)$ (formula \ref{eq:union_formula}). The same can be said about left extensions. We observe two facts:
\begin{enumerate}
    \item Let $\alpha$ be an $(\ell-1)$-mer reaching only one vertex $u\in pred(v)$, and assume it is the $j$th $(\ell-1)$-mer of $u$. Then,
    $
    LE(\alpha\lambda(v), v) = LE(\alpha, u) = E_{\ell-1}(u)[j].
    $
    \item Assume $\alpha$ reaches a maximal range of in-neighbours $[u_a,u_{a+1}\ldots,u_b]\subseteq pred(v)$, with $a<b$. Then,
    \begin{align}
    LE(\alpha\lambda(v), v)=LE(\alpha, u_a) + \sum_{a<j\leq b} (LE(\alpha, u_j) - \mathsf{share}_{\ell}(u_{j-1}, u_j)),\label{eq1}
    \end{align}
    where $LE(\alpha, u_a)=E_{\ell-1}(u_a)[C_{\ell-1}(u_a)]$ and $LE(\alpha, u_j)=E_{\ell-1}(u_j)[1]$. Note how we must avoid double-counting left-extensions of size $\ell$ that might reach several in-neighbours.
\end{enumerate}

The second case can only happen for $(\ell-1)$-mers that are the minimum or the maximum ones of an in-neighbour $u_j$, by Corollary \ref{corollary:max_min}. Similarly, a ``middle'' $(\ell-1)$-mer $\alpha$ that is not $\min^{\ell-1}_{u_j}$ or $\max^{\ell-1}_{u_j}$ will always fall into case 1, where $LE(\alpha\lambda(v), v)=LE(\alpha, u_j)$, that can be computed recursively. With these observations, we design an algorithm that computes $E_{k-1}(v)$ for all $v\in V(W)$ in two steps. First, we compute all $E_{\ell}(v)[1]$ and $E_{\ell}[C_\ell(v)]$ for all $v\in V(W)$ and $1\leq \ell\leq k$, which can be done in $O(|W|k)$ time, and can be used to compute the number of left extensions of $(k-1)$-mers in case 2. Then, we use a recursive algorithm to report the middle values of $E_{k-1}(v)$. To implement these algorithms, we will need all $\ell$-mer counts $C_{\ell}(v)$ for all $v\in V(W)$ and $1\leq\ell\leq k$, as well as the data structure to support $\mathsf{share}_\ell(u,v)$ queries and $\mathsf{ILCS}(v)$ queries.

\subparagraph*{Computing $E_{\ell}(v)[1]$ and $E_{\ell}(v)[C_{\ell}(v)]$. }First, we observe that $|E_1(v)|=1$, and $E_1(v)[1]=E_1(v)[C_1(v)]=[C_2(v)]$ by input-consistency. Then, for $\ell=2,3,\ldots,k-1$, we compute $E_\ell(v)[1]$ as follows. First, we scan the in-neighbours left-to-right until finding the first one with a non-zero count of $(\ell-1)$-mers. Let $u_s$ be this vertex. We add the count of left-extensions of the first $(\ell-1)$-mer of $u_s$, and in case that $(\ell-1)$-mer is shared by more in-neighbours (which can be checked with $\mathsf{share}_{\ell-1}(u_s,u_{s+1})$), updating the count as in equation (\ref{eq1}) above for every in-neighbour that shares the $(\ell-1)$-mer. The computation of $E_\ell(v)[C_{\ell}(v)]$ is similar, but going right-to-left, starting from the rightmost in-neighbour $u_e$ that has $C_{\ell-1}(u_e)>0$.
% We can express the quantity with the following formulas, where $pred(v)=(u_1,\ldots,u_d)$:
% \begin{gather*}
%     E_\ell(v)[1] = E_{\ell-1}(u_s)[1] + [\mathsf{ILCS}(u_s)\geq \ell-1]\cdot \sum_{s<j\leq d}\mathsf{share}_{\ell-1}(u_s,u_j)(E_{\ell-1}(u_j)[1] - \mathsf{share_\ell}(u_{j-1}, u_j)) \\
%     E_\ell(v)[C_\ell(v)] = E_{\ell-1}(u_e)[C_{\ell-1}(u_e)] + [\mathsf{ILCS}(u_e)\geq \ell-1]\cdot \\ \sum_{1\leq j<e}\mathsf{share}_{\ell}(u_j,u_e)(E_{\ell-1}(u_j)[C_{\ell-1}(u_j)] - \mathsf{share}_\ell(u_j,u_{j+1}))
% \end{gather*}
Clearly, we spend $O(1+\deg_{in}(v))$ time per vertex per value of $\ell$, thus in total we spend $O((n+m)k)=O(|W|k)$ time, and obtain a data structure of size $O(nk)$. 

\subparagraph*{Computing the rest of $E_{k-1}(v)$ recursively. }We sketch a recursive procedure called $\mathsf{left\_extensions}(w,\ell, E_{k-1}(v))$, that appends to one array $E_{k-1}(v)$ the counts of left-extensions that we find for $\ell$-mers of $w$. This procedure uses too many recursive calls, which we fix in the next section. 

We start by calling $\mathsf{left\_extensions}$ with $w=v$ and $\ell=k-1$. Let $pred(w)=\{u_1,\ldots,u_d\}$. We scan the in-neighbours of $w$ left-to-right, and we find the first one $u_s$ with non-zero count of $(\ell-1)$-mers, $C_{\ell-1}(u_s)>0$. We repeat the following steps for all in-neighbours starting from $u_j=u_s$: 
\begin{enumerate}
    \item Process the first $(\ell-1)$-mer $\alpha$ of $u_j$. If we have counted its number of left-extensions already, ignore it. Otherwise, obtain the count of its left-extensions from the previously computed $E_{\ell-1}(u_j)[1]$. If $\alpha$ is shared with other in-neighbours, we have to apply formula (\ref{eq1}) to count them. Append this count to $E_{k-1}(v)$. If we applied formula (\ref{eq1}), we start again at step 1 with the last in-neighbour that has $\alpha$ incoming. Otherwise, move to step 2. 
    \item If $C_{\ell-1}(u_j)>2$, recurse on $u_j$ with a call to $\mathsf{left\_extensions}(u_j,\ell-1,E_{k-1}(v))$.
    \item Process the last $(\ell-1)$-mer $\beta$ of $u_j$. If $\beta\lambda(w)$ is the last $\ell$-mer of $w$, stop. Otherwise, check $\mathsf{share}_{\ell-1}(u_j,u_{j+1})$. If it is false, we append $E_{\ell-1}(u_j)[C_{\ell-1}(u_j)]$ to $E_{k-1}(v)$. If it is true, we compute the number of left-extensions of $\beta$ with formula (\ref{eq1}), and we move to the last in-neighbour that has $\beta$ incoming.
\end{enumerate}

It is possible to compute these steps for all in-neighbours of $w$ in time $O(\deg_{in}(w))$ plus the time to make the recursive calls, by using the left and right intervals and the $\mathsf{share}$ function. We leave the details in appendix \ref{appendix:b}. Still, this procedure uses too many recursive calls, since for some $w$ and $\ell$, $\mathsf{left\_extensions}(w,\ell,E_{k-1}(v))$ may be called by many out-neighbours of $w$. However, we observe that every such call outputs the exact same numbers to $E_{k-1}(v)$. Therefore, using dynamic programming with memoization we can avoid computing repeated sub-problems.

\subparagraph*{Memoization. }We build the $(k-1)$-times unfolded graph $W_{k-1}$, and we add to all the vertices of $W_{\ell-1}$ a placeholder for two pointers $p$ and $q$. At the start of the algorithm, these pointers are empty. Crucially, we keep and reuse the pointers between computations of $E_{k-1}(v_i)$ for different $v_i$. We consider a call to $\mathsf{left\_extensions}(w,\ell, E_{k-1}(v_i))$ as a call on the vertex $w^\ell$ at layer $\ell$. Whenever we recurse on a vertex $w^\ell$ for the first time, we set $p$ to the next position in the current list $E_{k-1}(v_i)$, where the next value will be appended. When the recursion returns, $q$ is set to point to the last filled position in $E_{k-1}(v_i)$. The next time we recurse on $w^\ell$, whether in the computation of $E_{k-1}(v_i)$ or another $E_{k-1}(v_j)$, instead of running the algorithm, we output all values between $p$ and $q$ on $E_{k-1}(v_i)$. 

\subparagraph*{Complexity. }Clearly, now there are at most $O(nk)$ recursions during the computation of \emph{all} $E_{k-1}(v)$. These take time $O(1+\deg_{in}(v))$ once per vertex, when the pointers $p$ and $q$ have not been computed yet. This amounts to $O(|W|k)$ time in total. All other time is spent reporting values of $E_{k-1}(v)$, which as we argued before, there are $O(|V(dBg)|)=O(n_k)$ of them. We have finally showed the following theorem:

\begin{theorem}
    We can build the data structure for de Bruijn graph navigation on the $k$-mers of a Wheeler graph $W$, in time and space $O(n_k + |W|k\log(\max_{1\leq\ell\leq k} n_\ell)+\sigma\log m)$.  
\end{theorem}

\section{Conclusion}\label{sec:conclusion}

This work, to the best of our knowledge, is the first to study the problem of $k$-mer counting on graphs. We have introduced the problem and established both hardness results and tractable cases. While the problem turns out to be $\#$P-hard, our results show that the problem is in P in the class of Wheeler graphs, and by extension on any class of graphs that can be transformed into a Wheeler graph of polynomial size, without changing the $k$-mers. As the first application for our $k$-mer counting algorithm, we have shown how to simulate the de Bruijn graph of the $k$-mers of a Wheeler graph, without having to construct $k$-mers explicitly, improving on the state-of-the-art in this respect~\cite{siren2017indexing}.

The tractability of the problem on Wheeler graphs raises several natural open questions. First, can we find the smallest Wheeler graph that preserves the $k$-mers of a given input graph? Second, which kinds of graphs admit polynomial-size Wheelerizations with the same $k$-mer set? Third, what is the best achievable polynomial degree of $n$ for algorithms running in $O(\textrm{poly}(n) \log k)$ time on Wheeler graphs? This work thus marks a first step toward a broader theory of $k$-mer counting in graphs.

%As another application, our algorithms could be used to quantify the similarity of two labeled graphs via the \emph{Jaccard index} of the $k$-mer sets of the graphs. The Jaccard index of two sets $A$ and $B$ is the ratio $|A \cap B| / |A \cup B|$, which is computable with our algorithms.

\newpage

\bibliography{references}
\clearpage
\appendix
\section{Proof of Lemmas \ref{lemma:kmer_order} and \ref{lemma:lcs_and_kmers}}\label{appendix:lemma6}
\lemmashared*
\begin{proof}
    If $\alpha = \beta$, the claim is trivially true. Otherwise, let $i$ be the largest index such that $\alpha[i] \neq \beta[i]$.
    Let $\gamma = \alpha[i+1..k] = \beta[i+1..k]$. 
    There exists a path $u_1, \ldots u_\ell$ labeled with $\gamma$ such that $u_\ell = u$, and a path $v_1, \ldots v_\ell$ labeled with $\gamma$ such that $v_\ell = v$.  We also have $u_j \neq v_j$ for all $j = 1, \ldots, \ell$, or otherwise the graph is not deterministic, so we can iterate Lemma \ref{lemma:det_wheeler_graph} for pairs of edges $(u_j, u_{j+1}, c)$, $(v_j, v_{j+1}, c)$, starting from $u_\ell<v_\ell\Rightarrow u_{\ell-1}<v_{\ell-1}$ to obtain $u_1 < v_1$. Since $\lambda(u_1) = \alpha[i] \neq \beta[i] = \lambda(v_1)$, if $\alpha[i]\succ\beta[i]$ then W1) implies $u_1>v_1$, a contradiction. Thus, it must hold that $\alpha[i] \prec \beta[i]$, which implies $\alpha \prec \beta$.
\end{proof}

\lemmaorder*
\begin{proof}
    The lemma follows from two facts:
    \begin{enumerate}
        \item $u$ and $v$ can only share a $k$-mer if $|\sup_u|\geq k$ and $|\inf_v|\geq k$,
        \item $|\sup_u|\geq k$ implies that $\max^k_u$ is a suffix of $\sup_u$, and $|\inf_u|\geq k$ implies that $\min^k_u$ is a suffix of $\inf_u$.
    \end{enumerate}
    Thus, if $\mathsf{LCS}(\sup_u,\inf_v)\geq k$, then $|\sup_u|\geq k, |\inf_v|\geq k$, and 2) implies $\max^k_u=\min^k_v$. Otherwise, if $\mathsf{LCS}(\sup_u,\inf_v)<k$, either a) $|\sup_u|<k\vee|\inf_v|<k$, and 1) does not hold, so the $k$-mers cannot be shared; or $|\sup_u|\geq k \wedge |\inf_v|\geq k$, but then by 2) $\max^k_u$ and $\min^k_v$ are suffixes of $\sup_u$ and $\inf_v$, and then their \textsf{LCS} is less than $k$.
    
    For the proof, let $I_u$ be the set of left-infinite strings reaching $u$, union with the set of finite strings starting at a source, and ending at $u$. It is easy to check that the definitions of $\inf_u$ and $\sup_u$ in \cite{DBLP:conf/cpm/AlankoCCKMP24} are equivalent to $\inf_u=\min I_u$ and $\sup_u=\max I_u$. 
    Observe that if $\max^k_u$ exists, there exists a maximum string in $I_u$ suffixed by $\max^k_u$: Let $M$ be the walk, possibly left-infinite, that is created right-to-left by following i) first the edges that form $\max^k_u$, and then ii) the maximum incoming edge to the last found vertex on the walk. This walk ends at $u$ and is either a left-infinite string, or starts in a source; thus it spells a string in $I_u$, that is suffixed by $\max^k_u$. The same argument holds for a minimum string in $I_u$ suffixed by $\min^k_u$, following the smallest incoming edge.

    We prove 2) first. Assuming that $|\sup_u|\geq k$, let $\alpha$ be the suffix of length $k$ of $\sup_u$. We have that $\max^k_u\succeq \alpha$, because $\max^k_u$ can always be chosen as $\alpha$. By contradiction, assume that $\max^k_u\succ \alpha$. Now, consider the maximum string $\gamma\in I_u$ that is suffixed by $\max^k_u$: Since $\max^k_u\succ \alpha$, then $\gamma\succ\sup_u$, but this contradicts $\sup_u=\max I_u$. Similarly, if $|\inf_u|\geq k$ and $\alpha$ is the suffix of length $k$ of $\inf_u$, we have $\min^k_u\preceq\alpha$ because we can choose $\min^k_u=\alpha$. But if $\min^k_u\prec\alpha$, then the minimum string $\gamma\in I_u$ suffixed by $\min^k_u$ is strictly smaller than $\inf_u$, a contradiction.

    To prove 1), we show two intermediate claims:
    \begin{enumerate}[A)]
        \item If $\min^k_u$ exists, $|\inf_u|<k$ implies $\inf_u\prec\min^k_u$. To see why, let $\gamma$ be the minimum string in $I_u$ suffixed by $\min^k_u$. We have $\gamma\neq\inf_u$ because they are of different lengths, but $\inf_u=\min I_u$, therefore $\inf_u\prec\gamma$. Since $\min^k_u$ is a suffix of $\gamma$ of length $k$, but $|\inf_u|<k$ and $\inf_u\prec\gamma$, then we have $\inf_u\prec\min^k_u$.
        \item $\max^k_u\preceq\sup_u$ regardless of the length of $\sup_u$. Let $\gamma$ be the maximum string in $I_u$ suffixed by $\max^k_u$: Since $\max^k_u$ is a suffix of $\gamma$, $\max^k_u\preceq\gamma$, and since $\sup_u=\max I_u$, we have $\gamma\preceq\sup_u$, which implies B).
    \end{enumerate}
    Now, we have all the ingredients to prove 1): if $|\inf_v|<k$, we have by A) and B) that $\max^k_u\preceq\sup_u\preceq\inf_v\prec\min^k_v$, so the $k$-mers cannot be shared. So we can assume $|\inf_v|\geq k$, and by 2) $\min^k_v$ is a suffix of $\inf_v$. If $|\sup_u|<k$, because of the length of $\inf_v$, we have $\sup_u\prec\inf_v$, and since $\min^k_v$ is a suffix of $\inf_v$, we have $\sup_u\prec\min^k_v$, that by B) implies $\max^k_u\prec\min^k_v$. So, the only time when $u$ and $v$ can share a $k$-mer is when $|\sup_u|\geq k$ and $|\inf_v|\geq k$. $\qed$
\end{proof}

\section{Details of the $O(n^4\log k)$ algorithm}\label{appendix:n4logk}
We show here how to compute $T^W_\ell$ and $R([u_i, u_j], \beta)$. We use the $\mathsf{share}_\ell$ and $\mathsf{ILCS}_{\geq\ell}$ functions, and the algorithm by Kim et al \cite{kim_et_al:LIPIcs.CPM.2023.16}. Given that we can compute the $\mathsf{LCS}$ array in time $O(m+n^2)$ \cite{DBLP:conf/cpm/AlankoCCKMP24}, the $\mathsf{share}_\ell$ and $\mathsf{ILCS}_{\geq\ell}$ functions can be obtained directly from it and an RMQ data structure, as shown before (formula \ref{equation:share}).
The algorithm of Kim et al. constructs the sets $P_\ell(v)$ at each iteration, that contain the vertices $u$ such that there exists a path spelling $suf_\ell(\inf_v)$ from $u$ to $v$. We make a matrix $D^{\inf}_\ell$ of size $n\times n$, where $D^{\inf}_\ell(u,v)=1$ if $u\in P_\ell(v)$. Furthermore, we index the columns of $D^{\inf}_\ell$ for range maximum queries \cite{Fischer-RMQ}. With this data structure, we can answer the following query: For an interval $[u_i,u_j]$, define $D^{\inf}_\ell([u_i,u_j],v)=1$ if there exists any $w\in[u_i,u_j]$ such that $suf_{\ell}(\inf_v)$ starts at $w$ and ends at $v$, otherwise $0$. We can build the corresponding data structures for the suprema, to answer the following queries: $D^{\max}_\ell([u_i,u_j],v)=1$ if and only if $suf_\ell(\sup_v)$ starts at some $w\in [u_i,u_j]$ and ends at $v$. Now, we can compute $T^W_\ell([u_i,u_j],v)$ with the following formula:
\begin{align*}
    T_\ell^W([u_i,u_j], v)&=T_\ell([u_i,u_j],v) - \mathsf{share}_{\ell}(v,v-1)\cdot D_{\ell}^{\inf}([u_i,u_j],v) \\
    & -\mathsf{share}_\ell(v,v+1)\cdot D_{\ell}^{\sup}([u_i,u_j],v])\cdot [\mathsf{ILCS}(v)<\ell].
\end{align*}

It takes time $O(n^2)$ to compute $P_\ell(v)$ for all $v$ and for both infima and suprema in each iteration. Then, we spend $O(n^2)$ time to build the matrices $D^{\inf}_\ell$ and $D^{\sup}_\ell$ and their correspondent range maximum data structures. The queries take $O(1)$ time. Thus, after $O(n^2)$ time, we can compute $T^W_{\ell}([u_i,u_j],v)$ in constant time from $T_{\ell}([u_i,u_j],v)$. 

To compute $R([u_i,u_j],\beta)$ for $\beta\in B_{\ell/2}$, we first compute all intervals $R([u_1,u_n],\beta)$ by scanning the vertices left-to-right and looking for the maximal intervals of vertices that share an $\ell$-mer, aided by the functions $\mathsf{share}_{\ell}$ and $\mathsf{ILCS}$ (see a more detailed version in appendix \ref{appendix:b}). Once we have them, we restrict all found intervals $[w_a,w_b]=R([u_1,u_n],\beta)$ to the vertices reachable from $[u_i,u_j]$ as follows. Scanning $[w_a,w_b]$ left to right, we change the left endpoint of $[w_a,w_b]$ to the first vertex $w_p\in [w_a,w_b]$ such that $D^{\sup}_\ell([u_i,u_j],w_p)=1$. Similarly, we change the last endpoint to the last vertex $w_q$ such that $D^{\inf}_{\ell}([u_i,u_j],w_q)=1$. Since there are $O(n)$ such intervals, and their intersections are of size $1$, this procedure can be done in $O(n)$ time for a given interval $[u_i,u_j]$. Across all $O(n^2)$ intervals, we have a total of $O(n^3)$ work. 

\section{Computing intervals and left-extensions}\label{appendix:b}
First, we show how to compute all $\mathsf{left\_interval}_\ell(v)$ and $\mathsf{right\_interval}_\ell(v)$ for all vertices $v$ and all $1\leq \ell\leq k$ in $O(nk)$ time once we have the $\ell$-mer counts. Then, we give the missing details for the left-extension counting algorithm of subsection \ref{subsection:computing_left_extensions}.

\subparagraph*{Computing intervals.}For any $\ell$, the sources $s$ have $\mathsf{left\_interval}_\ell(s)=\mathsf{right\_interval}_\ell(s)=\emptyset$. For $\ell=1$, the left and right intervals of any vertex are the same, and they correspond to the intervals of vertices with the same incoming label. For $\ell>1$, first we find for every non-source $v$ if $\mathsf{left\_interval}_\ell(v)=\mathsf{right\_interval}_\ell(v)$, by checking if $\min^\ell_{v}=\max^{\ell}_{v}$. This is easy: $\min^{\ell}_v=\max^{\ell}_v$ if and only if $C_{\ell}(v)=1$.
%Let $pred(v)=u_1,\ldots,u_d$, and find the first $u_s\in pred(v)$ and the last $u_e\in pred(v)$ such that $C_{\ell-1}(u_s)>0$ and $C_{\ell-1}(u_e)>0$, respectively. Then, if for all $s\leq p \leq e$ it holds that $\mathsf{left\_interval}_{\ell-1}(u_p)=\mathsf{right\_interval}_{\ell-1}(u_e)$, and for all $s\leq p<e$ it holds that $\mathsf{share}_{\ell-1}(u_p,u_{p+1})=1$, we have $\min^{\ell}_{v}=\max^{\ell}_{v}$; otherwise they are different. 
%For $\ell>1$, for every vertex $v_i$ we can check if its left or right interval are empty, by checking the interval at level $\ell-1$ of the leftmost and rightmost in-neighbour, respectively. If they are empty, the corresponding interval at level $\ell$ is empty too.

Once we have this information, it is easy to run a loop through the non-source vertices, from smallest to largest, to find the maximal runs of vertices $v_i..v_j$ such that for all $i\leq p<j$ we have $\mathsf{share}_{\ell-1}(v_p,v_{p+1})=1$, and for all $i<p<j$ it holds that $\mathsf{left\_interval}_{\ell}(v_p)=\mathsf{right\_interval}_{\ell}(v_p)$. With this information, we can find all maximal intervals of all the vertices and assign them correctly. The algorithm takes $O(1)$ time for each vertex in each level, thus it takes $O(nk)$ time in total.

\subparagraph*{Computing left-extensions. }We left out some details on the recursive algorithm to compute left-extensions. For a vertex $w^\ell$ and an in-neighbour $u_j$ of $w$, let $\alpha$ and $\beta$ be the first and last $(\ell-1)$-mers of $u_j$, respectively. We clarify the following points:
\begin{itemize}
    \item How to compute formula \ref{eq1} in time proportional to the number of in-neighbours that the $(\ell-1)$-mer reaches. Note that this may be smaller than the size of the interval of that $(\ell-1)$-mer.
    \item How to keep track of whether we have counted the number of left-extensions of $\alpha$.
    \item How to know if $\beta\lambda(w)$ is the last $\ell$-mer of $v$.
\end{itemize}

For the first point, when reporting the number of left-extensions of the last $(\ell-1)$-mer of $u_j$, it may happen that $\mathsf{right\_interval}_{\ell-1}(u_j)$ is large, but it contains many vertices that are not in-neighbours of $w$. To compute formula \ref{eq1} efficiently, we get the right endpoint $w_r$ of $\mathsf{right\_interval}_{\ell-1}(u_j)$, and we iterate the in-neighbours of $w$ starting from $u_j$ as long as they are smaller than $w_r$. For each of them, we add the appropriate term to the sum, and we report the value at the end.

For the second point, we keep a variable $\mathsf{report\_min}$ throughout the algorithm, that indicates whether we should report the number of left-extensions of $\min^{\ell-1}_{u_j}$ or not. At the start of the algorithm it is \emph{false}, since the number of left-extensions of $\min^{\ell}_w$ has been counted already (in $E_\ell(w)[1]$). When processing the last $(\ell-1)$-mer $\beta$ of an in-neighbour $u_j$, it may or may not be shared with more in-neighbours. If it is not shared, then $\mathsf{report\_min}$ is set to $true$. Otherwise, if it is shared, and the last in-neighbour $u_b$ of $w$ that $\beta$ reaches has $C_{\ell-1}(u_b)>1$, then we set $\mathsf{report\_min}$ to $false$, otherwise to $true$. 

For the last point, to avoid reporting the number of left-extensions $\max^{\ell}_w$, we have two cases. If we arrive at the last in-neighbour of $w$, we just avoid running the last step of the algorithm. The other case is if, for some $j<d$, it happens that $\mathsf{right\_interval}_{\ell-1}(u_j)$ contains $u_d$, and $C_{\ell-1}(u_d)=1$. Thus, we must check every time we compute formula \ref{eq1} if this is the case, and stop if so.

\end{document}